\let\oldthebibliography\thebibliography  
\renewcommand{\thebibliography}[1]{  
  \oldthebibliography{#1}  
  \normalsize %
}
\definecolor{burntorange}{rgb}{0.8, 0.33, 0.0}
\def\isanonymous{0}
\def\isfullversion{0}
\newcommand{\ifanon}[2]{
\ifthenelse{\equal{\isanonymous}{1}}
{{#1}}
{{#2}}
}
\newcommand{\fullversion}[2]{
\ifthenelse{\equal{\isfullversion}{1}}{{#1}}{{#2}}}
\definecolor{dartmouthgreen}{rgb}{0.05, 0.5, 0.06}
\newcommand{\ours}{SLVR}
\newcommand{\R}{\mathbb{R}}
\mathchardef\mhyphen="2D
\newcommand{\msf}[1]{\mathsf{#1}}
\newcommand{\mypara}[1]{\setlength{\parskip}{5pt}
\noindent\textbf{#1}}
\newcommand{\ie}{\textit{i.e.,}\@\xspace}
\newcommand{\eg}{\textit{e.g.,}\@\xspace}
\newcommand{\makevect}[1]{{\ensuremath{\mathbf{#1}}}}
\renewcommand{\vec}[1]{\makevect{#1}}
\newcommand{\F}{\mathbb{F}}
\newcommand{\PartySet}[1]{\mathcal{P}_{#1}}
\newcommand{\server}{\mathcal{S}}
\newcommand{\mpcnodes}{\PartySet{MPC}}
\newcommand{\cor}{m_c}
\newcommand{\valid}{\msf{Valid}}
\newcommand{\arith}[1]{\llbracket #1\rrbracket}
\newcommand{\scr}{\msf{Chk}}
\newcommand{\score}[2]{\msf{scr}^{#2}_{#1}}
\newcommand{\Acc}[2]{\msf{Acc}^{#2}_{#1}}
\newcommand{\diff}[2]{\msf{diff}^{#2}_{#1}}
\newcommand{\Sim}{\mathcal{S}}
\newcommand{\Adv}{\mathcal{A}}
\newcommand{\Prot}[1]{\ensuremath{\Pi_{\mathsf{#1}}}}
\newcommand{\PSecAgg}{\Prot{Sec \mhyphen Agg}}
\newcommand{\PShare}{\Prot{Share}}
\newcommand{\PRec}{\Prot{Recon}}
\newcommand{\PCheck}{\Prot{Check}}
\newcommand{\PMult}{\Prot{Mult}}
\newcommand{\PSqrt}{\Prot{Sqrt}}
\newcommand{\Func}[1]{\ensuremath{\mathcal{F}_{\mathsf{#1}}}}
\newcommand{\FSecAgg}{\Func{Sec \mhyphen Agg}}
\newcommand{\FSort}{\Func{Sort}}
\newcommand{\FSecInf}{\Func{Sec \mhyphen Inf}}
\newcommand{\FRand}{\Func{Rand}}
\newcommand{\FMaxSoft}{\Func{Max \mhyphen Soft}}
\newcommand{\FZeroOne}{\Func{Zero \mhyphen One}}
\newcommand{\FMult}{\Func{Mult}}
\newcommand{\FComp}{\Func{Comp}}
\newcommand{\appref}[1]{Appendix~\ref{app:#1}}
\newcommand{\figref}[1]{Fig.~\ref{fig:#1}}
\newcommand{\tabref}[1]{Table~\ref{tab:#1}}
\newcommand{\secref}[1]{Section~\ref{sec:#1}}
\newcommand{\figlab}[1]{\label{fig:#1}}
\newtcolorbox{protobox}[2][]{%
  enhanced,
  title        = {#2},
  attach boxed title to top left={xshift=+3mm,yshift*=-3mm},
  breakable    = false,
  colback      = white, %
  colframe     = black!75,
  fonttitle    = \bfseries,
  colbacktitle = black!10!white,
  coltitle     = black,
  #1
}
\newenvironment{protofig}[3]{%
  \begin{figure}[!h]
    \newcommand{\FigCaption}{#2}
    \newcommand{\FigLabel}{#3}
    \begin{protobox}{#1}
    }{%
    \end{protobox}
    \vspace{-1em}
    \caption{\FigCaption}
    \figlab{\FigLabel}
  \end{figure}
}
\theoremstyle{plain}
\newtheorem{theorem}{Theorem}
\theoremstyle{definition}
\newtheorem{definition}[theorem]{Definition}
\theoremstyle{remark}
\newcommand{\argmax}{\operatornamewithlimits{arg\!\max}}
\newcommand{\ben}{\begin{enumerate}}
\newcommand{\een}{\end{enumerate}}
\newcommand{\beq}{\begin{equation}}
\newcommand{\eeq}{\end{equation}}
\newcommand{\beqa}{\begin{eqnarray}}
\newcommand{\eeqa}{\end{eqnarray}}
\newcommand{\bit}{\begin{itemize}}
\newcommand{\eit}{\end{itemize}}
\newcommand{\btab}{\begin{tabular}}
\newcommand{\etab}{\end{tabular}}
\newcommand{\noprint}[1]{}
\def \ie {{\em i.e.},~}
\def \eg {{\em e.g.},~}
\newcommand{\indicator}[1]{\mathds{1}\left[#1\right]}
\def \bff {\mathbf{f}}
\def \bfs {\mathbf{s}}
\def \bfu {\mathbf{u}}
\def \bfw {\mathbf{w}}
\def \bfx {\mathbf{x}}
\def \bfD {\mathbf{D}}
\def \calA {\mathcal{A}}
\def \calC {\mathcal{C}}
\def \calD {\mathcal{D}}
\def \calL {\mathcal{L}}
\def \calS {\mathcal{S}}
\def \calV {\mathcal{V}}
\begin{document}

\title{\ours: Securely Leveraging Client Validation \\ for Robust Federated Learning}
\author{
Jihye Choi$\hspace{.5mm}^1$\thanks{Work done while at Visa Research.}~~,   
Sai Rahul Rachuri$\hspace{.5mm}^2$, 
Ke Wang$\hspace{.5mm}^{2 *}$, 
Somesh Jha$\hspace{.5mm}^1$,
Yizhen Wang$\hspace{.5mm}^2$ \\
$^1\hspace{.5mm}$University of Wisconsin-Madison ~ $^2\hspace{.5mm}$Visa Research}

\newcommand{\fix}{\marginpar{FIX}}
\newcommand{\new}{\marginpar{NEW}}

\let\oldcite\cite
\renewcommand{\cite}{\citep}

\maketitle

\begin{abstract}
    
    Federated Learning (FL) enables collaborative model training while keeping client data private. However, exposing individual client updates makes FL vulnerable to reconstruction attacks. 
    Secure aggregation mitigates such privacy risks but prevents the server from verifying the validity of each client update, creating a privacy-robustness tradeoff.
    Recent efforts attempt to address this tradeoff by enforcing checks on client updates using zero-knowledge proofs, but they support limited predicates and often depend on public validation data.
    We propose \ours, a general framework that securely leverages clients' private data through secure multi-party computation.
    By utilizing clients' data, \ours\ not only eliminates the need for public validation data, but also enables a wider range of checks for robustness, including cross-client accuracy validation.
    It also adapts naturally to distribution shifts in client data as it can securely refresh its validation data up-to-date. 
    Our empirical evaluations show that \ours\ improves robustness against model poisoning attacks, particularly outperforming existing methods by up to 50\% under adaptive attacks. Additionally, \ours\ demonstrates effective adaptability and stable convergence under various distribution shift scenarios.

\end{abstract}

\section{Introduction}
\label{sec:intro}

Federated Learning (FL) is a widely used paradigm for training models across distributed data sources~\cite{fedavg}. 
In FL, clients compute model updates locally on private data and send them to a central server. 
FL was originally designed with privacy in mind, aiming to share model updates instead of raw data, under the assumption that these updates would not leak sensitive information. However, recent attacks have shown that exposing client updates in the clear makes FL vulnerable to full-scale reconstruction attacks~\citep{zhu2019deep,geiping2020inverting}.

In response to these privacy risks, FL has been combined with secure aggregation techniques such as multiparty computation (MPC)~\cite{bonawitz2017practical, bell2020secure, fereidooni2021safelearn, so2022lightsecagg} or zero-knowledge proofs (ZKPs) to ensure that the server only learns the aggregated update and not individual updates (Figure~\ref{fig:a_secure}).
However, this privacy guarantee comes at the cost of robustness --- since the server no longer sees individual updates, it lacks a mechanism to verify their validity before aggregation. 
There is active research on Byzantine-robust defense mechanisms to filter out malformed updates in plaintext FL~\cite{sun2019normbound, steinhardt2017normball, yin2018trimmedmean, blanchard2017krum, cao2020fltrust}, but integrating these checks into secure aggregation is challenging.
Since these checks are typically built in plaintext, it is generally the case that the server has access to data, or metadata about the updates that it would not have when using secure aggregation. This makes it difficult to translate these predicates to secure aggregation.

\begin{figure*}[t]
\centering

\begin{subfigure}{0.32\textwidth}
\includegraphics[width=\linewidth]{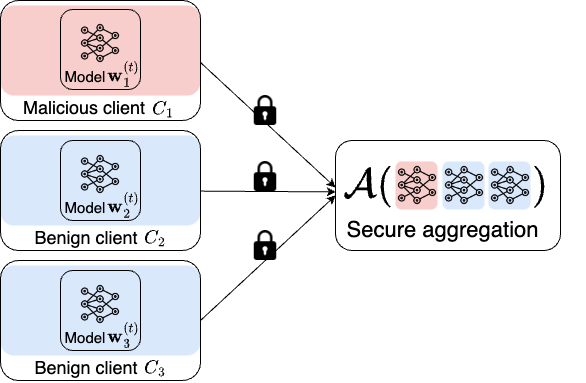}
\caption{Secure aggregation without robustness against malicious clients updates.}
\label{fig:a_secure}
\end{subfigure}
\hspace{.1em}
\begin{subfigure}{0.32\textwidth}
\includegraphics[width=\linewidth]{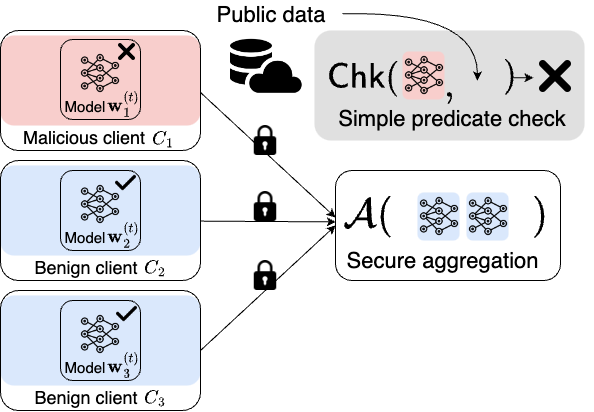}
\caption{Robust secure aggregation with simple model integrity check on public data.}
\label{fig:b_robust_with_publicval}
\end{subfigure}
\hspace{.1em}
\begin{subfigure}{0.32\textwidth}
\includegraphics[width=\linewidth]{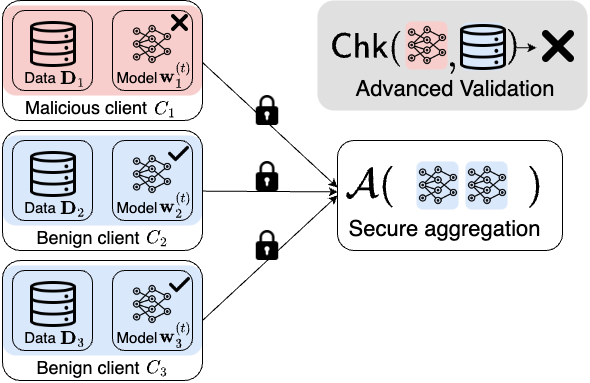}
\caption{\ours~(ours) with advanced validation check securely leveraging client private data.}
\label{fig:c_ours}
\end{subfigure}

\caption{ \textbf{Previous methods vs. \ours.}
Secure aggregation~(\ref{fig:a_secure}) lacks the means to check model integrity privately. Recent byzantine-robust secure aggregation schemes enable integrity checks with zero-knowledge proofs~(\ref{fig:b_robust_with_publicval}), but only support simple predicate (\eg $\ell_2$ norm of client update), and may rely on public datasets. ~\ours~securely leverages clients' local data for integrity checks via MPC~(\ref{fig:c_ours}), supporting more powerful predicates without requiring public data. 
}

\label{fig:ours-overview}
\end{figure*}

There are recent efforts to incorporate robustness checks into secure FL~\cite{franzese2023robust}, but they remain limited in terms of the flexibility of checks that are supported or the reliance on public validation data (Figure~\ref{fig:b_robust_with_publicval}).
Specifically, ZKP-based approaches~\cite{rofl, elsa} only support simple predicates, such as $l_2$ and $l_{\infty}$ norm bounds, while Eiffel~\cite{CCS:CGJv22} extends support to more predicates but relies on a public data to determine the predicate hyperparameters.
In practice, public validation data may not always be available, and even when it exists, its utility depends on how well it represents the overall client distribution. As client data naturally evolves or new clients joining over time, maintaining an effective public dataset becomes non-trivial, particularly in secure FL setups. 

To this end, we attempt to resolve the dilemma of privacy vs. robustness of FL from a fresh angle by asking:
\begin{tcolorbox}
    \begin{itemize}[leftmargin=*]
        \item Can we leverage clients' private data while preserving the privacy guarantee of secure aggregation?
        \item Can leveraging private data enable more advanced checks that enhance robustness while overcome the reliance on public validation data? 
    \end{itemize}
\end{tcolorbox}
We present~\ours, a general framework for \textbf{S}ecurely \textbf{L}everaging clients' private data to \textbf{V}alidate updates for \textbf{R}obust FL.
Aided by MPC, \ours\ can compute statistics over multiple clients' private data that is otherwise impossible in previous secure FL work (Figure~\ref{fig:c_ours}). These statistics, such as cross-client validation accuracy, can play a vital role in enhancing model robustness.
The modular design of \ours\ makes it compatible with existing MPC implementations, and thus allows flexible deployment in various threat models.
Furthermore, \ours\ naturally adapts to distribution shifts as it can securely refresh its validation data up-to-date.  

In summary, our contributions are as follows,
\begin{enumerate}
    \item We propose the first framework that enables cross-client checks using private inputs from different clients, eliminating reliance on public validation data. Our approach is compatible with existing MPC protocols, allowing flexible customization based on the threat model and computational constraints.
    \item  We evaluate \ours~against model poisoning attacks, including an adaptive attack---the strongest possible under the threat model. \ours~demonstrates competitive robustness and outperforms prior work under adaptive attacks (\eg by up to 50\% on CIFAR-10).
    \item We study various scenarios for client data distribution shift scenarios and empirically demonstrate the adaptability of \ours. In contrast, prior works struggle, experiencing severe accuracy degradation (\eg 30\% drop from MNIST to SVHN) or failing to progress.
\end{enumerate}

\section{Related Work}
\label{sec:related}

\subsection{Robust Federated Learning}

\mypara{Robustness to Model Poisoning.} Model poisoning is one of the most fundamental threats to FL. As formulated by~\citet{blanchard2017byzantine} and~\citet{yin2018trimmedmean}, a set of Byzantine malicious clients can send manipulated model updates to the server to degrade the global model's performance.
To mitigate this, various defenses have been proposed, primarily through robust aggregation methods that aim to filter out malicious updates from the final aggregation~\cite{sun2019normbound, steinhardt2017normball, yin2018trimmedmean, blanchard2017krum, cao2020fltrust}.
However, these approaches often fall short against adaptive attacks~\cite{fang2020local,SP:LBVKH23} or require access to a public dataset for validation~\cite{cao2020fltrust}, which may not always be available.

\mypara{Robustness to Distribution Shift.}
Another key challenge in FL is handling evolving client data distribution over communication rounds. 
Many approaches address this from the perspective of client heterogeneity, incorporating techniques such as transfer learning, multi-task learning, and meta-learning to mitigate distribution shifts across different domains and tasks~\cite{smith2017federated, khodak2019adaptive}.
More recent works study explicitly tackles evolving client distributions over time~\cite{yoon2021federated} or periodic client distribution shifts~\cite{periodicshift}.

While these advances improve the robustness of FL, they are designed for standard (plaintext) FL, where model updates are exchanged in the clear. 
However, privacy-preserving federated learning (PPFL) introduces additional constraints, making it non-trivial to directly apply these techniques. 
Many existing robust FL methods rely on inspecting individual model updates, a process that is inherently difficult in PPFL due to encryption or secure aggregation mechanisms. 
Our work aims to bridge this gap by providing a unified framework for PPFL that enhances robustness against both model poisoning attacks and distribution shifts.

\subsection{Privacy-preserving Federated Learning}
Various techniques have been proposed for PPFL, with key differences in their underlying security assumptions.
One of the primary differentiators is whether a protocol relies on a single-server or a multi-server aggregator. 
Single-server protocols for secure aggregation include works like \cite{CCS:BBGLR20,SP:MWAPR23,rofl,CCS:CGJv22}, while \cite{SP:RSWP23,scionfl} represents some of the state-of-the-art approaches in the multi-server setting. These techniques typically involve one or more of the following methods: secret-sharing \cite{SP:MWAPR23}, homomorphic encryption (HE) \cite{CCS:BIKMMP17}, and zero-knowledge (ZK) \cite{rofl,CCS:CGJv22}.

However, these methods generally focus on ensuring privacy and do not protect against adversarial behaviors such as model poisoning, as they do not validate client-submitted updates before aggregation.
To address this, recent research has explored incorporating zero-knowledge proofs (ZKPs) to enable privacy-preserving validation of model updates. 
While initial approaches were computationally expensive, advancements in ZK SNARKs, such as Bulletproofs~\cite{SP:BBBPWM18}, have significantly improved efficiency. 
ZK SNARKs offers small proof sizes and fast verification, making them particularly well-suited for FL, where clients often have bandwidth and computational constraints.

\begin{table}[!t]
\centering
\begin{adjustbox}{width=\columnwidth,center}
\begin{tabular}{lcccccc}
\hline
\rowcolor{gray!30}
\bf Protocol & 
\textbf{Trust Model} & 
\textbf{Input Validation} & 
\textbf{Private Bound} & 
\textbf{Dynamic Bound} &
\textbf{Single/Multi-Input Validation} & 
\textbf{Distribution Shift} \\ 
\hline
Clear & $\circ$ & \textcolor{dartmouthgreen}{\checkmark} & \textcolor{red}{$\times$} & \textcolor{red}{$\times$} & Multi & \textcolor{dartmouthgreen}{\checkmark} \\ \hline
RoFL~\cite{SP:LBVKH23} & $\circ$ & \textcolor{dartmouthgreen}{\checkmark} & \textcolor{red}{$\times$} & \textcolor{red}{$\times$} & Single & \textcolor{red}{$\times$} \\ \hline
EIFFeL~\cite{CCS:CGJv22} & $\circ$ & \textcolor{dartmouthgreen}{\checkmark} & \textcolor{red}{$\times$} & \textcolor{red}{$\times$} & Single & \textcolor{red}{$\times$} \\ \hline
Elsa~\cite{SP:RSWP23} & $\circ \bullet$ & \textcolor{dartmouthgreen}{\checkmark} & \textcolor{red}{$\times$} & \textcolor{red}{$\times$} & Single & \textcolor{red}{$\times$} \\ \hline
Mario~\cite{mario} & $\circ \bullet$ & \textcolor{dartmouthgreen}{\checkmark} & \textcolor{red}{$\times$} & \textcolor{red}{$\times$} & Single & \textcolor{red}{$\times$} \\ \hline
\rowcolor{yellow!30}
\textbf{SLVR (Ours)} & $\circ \bullet$ & \textcolor{dartmouthgreen}{\checkmark} & \textcolor{dartmouthgreen}{\checkmark} & \textcolor{dartmouthgreen}{\checkmark} & Multi & \textcolor{dartmouthgreen}{\checkmark} \\ \hline
\end{tabular}
\end{adjustbox}
\caption{\small Comparison of protocols based on trust model, input validation, bound computation, single/multi-input validation, and support for data distribution shifts. $\circ$ represents a single-aggregator trust model (without privacy in the case of Clear), while $\circ \bullet$ represents distributed trust with two or more entities. 
}
\label{tab:new_comparison}
\end{table}

\subsection{Robust and Privacy-preserving Federated Learning}

Among existing approaches that integrate robustness and privacy in FL, RoFL~\cite{rofl}, EIFFeL~\cite{CCS:CGJv22}, and ELSA~\cite{elsa} are the most relevant to our work.
They aim to enhance the robustness of PPFL under various threat models while integrating secure aggregation. 
A common strategy in these works is to ensure that only valid model updates contribute to the global model by using predicate-based validation, where clients prove the validity of their updates through ZKPs.

In contrast, our framework takes a different approach by leveraging secure multiparty computation (MPC) to compute useful robustness-related statistics on secret-shared data. 
This design allows for greater flexibility in defining validation checks, as opposed to relying solely on predefined ZKP predicates. 
In addition, \ours~allows the predicate to take multiple clients' private data as inputs, which enables cross-client validation that is neither feasible in plaintext FL nor in previous works. 
We compare with the features provided by some of the state-of-the-art frameworks in \tabref{new_comparison}.

\mypara{Secure Multiparty Computation (MPC).} 
Secure MPC is a well-established cryptographic technique that enables multiple parties to jointly compute a function over their private inputs while ensuring that no individual party learns anything beyond the output.
There are a range of applications for which MPC could be applied, such as secure auctions~\cite{FC:BCDGJK09}, privacy-preserving data analytics~\cite{prio,FC:BogTalWil12}, privacy-preserving secure training and inference~\cite{C:EGKRS20,CCS:MohRin18}, to name a few. 
In this work, the two building blocks we will use are protocols for secure inference and randomness generation.

\section{Problem Overview}
\label{sec:prelim}

In this section, we introduce the FL setting (Section~\ref{sec:setup}), followed by its threat analysis (Section~\ref{sec:sec-model}) and an overview of our goal (Section~\ref{sec:goals}).
A comprehensive list of notataions used throughout the paper can be found in Table~\ref{tab:notations}.

\subsection{Federated Learning with Secure Aggregation}
\label{sec:setup}

Federated learning with secure aggregation is a recent advancement in FL that attempts to enhance the security and privacy of FL. The clients no longer send plain-text model updates to the server for aggregation. Instead, the updates are aggregated via a secure computation protocol. We assume the existence of an \emph{aggregation committee} to facilitate the secure aggregation. In some scenarios, such as peer-to-peer (P2P) learning, the aggregation committee could be sampled from the set of clients \cite{SP:MWAPR23,CCS:BBGLR20}, whereas in the multi-server federated learning setting, the aggregation committee could just be the servers. We abstract these differences away by assuming that there exists a set of MPC nodes, denoted by $\mpcnodes$, that receive the updates from the clients and perform secure aggregation.

Formally, we consider an FL setting in which $m$ clients collaboratively train a global model $\bff_{\bfw}(\cdot)$ maintained on a single cloud server, $\calS$, parameterized by $\bfw$. Each client $\calC_i$ has its own local training dataset $\bfD_i = \{(\bfx_j, y_j)\}_{j=1}^{N_i}$ where $N_i = |\bfD_i|$, for $i\in [m]$. 
At the start of $t$-th communication round, the server broadcasts the current global model parameters $\bfw^{(t)}$ to all the clients and the aggregation committee (of which the server may be a part of). Each client $\calC_i$ uses $\bfw^{(t)}$ as the initial model, locally computes an update $\bfu^{(t)}_{i}$ on its local data $\bfD_i$ and then sends $\bfu^{(t)}_i$ to $\mpcnodes$ in secret-shared form. 
$\mpcnodes$ run the validation protocol, aggregates using an aggregation function $\calA$ (\eg weighted sum over the updates~\cite{fedavg}), and reconstructs the aggregated update $\bfu_{\msf{aggr}}^{(t)} = \calA(\bfu^{(t)}_1, \bfu^{(t)}_2, \dots, \bfu^{(t)}_{m})$ to $\calS$. 
$\calS$ updates the global model $\bfw^{(t+1)}$ based on $\bfu_{\msf{aggr}}^{(t)}$~\footnote{We refer to client model $\bfw^{(t)}_i$ and update $\bfu^{(t)}_i$ interchangeably because $\bfu^{(t)}_i = \bfw^{(t)}_i-\bfw^{(t-1)}$ and the global model $\bfw^{(t-1)}$ is public knowledge.}.
This process is repeated for $T$ communication rounds.

\mypara{MPC.} An MPC protocol enables a set of $r$ parties, denoted by $\PartySet{} = \{ P_1, \ldots, P_r \}$, to jointly compute a function $f$ that they all agree upon, on their private inputs $x_1, \ldots, x_r$. MPC guarantees that the parties only ever learn the final result of the computation, $f(x_1, \ldots, x_r)$, and nothing else about the inputs is leaked by any intermediate value they may observe during the course of the computation.

\mypara{Secret-Sharing.} When we say a value is secret-shared in this work, we assume it is shared via a linear secret-sharing scheme (LSSS). In an LSSS, a value $x \in \F_p$ is said to be secret-shared among $r$ parties if each $P_i$, for $i \in [r]$, holds $\arith{x}_i$ such that $x = \sum_{i = 1}^r \arith{x}_i \mod p$. Given two secret-shared values $x$, $y$, we can compute a linear combination $z = a \cdot x + b \cdot y$, where $a, b$ are public constants, by carrying out the same operations on the respective shares; $\arith{z} = a \cdot \arith{x} + b \cdot \arith{y}$. 
Computing the product of two secret values, $\arith{z} = \arith{x} \cdot \arith{y}$, can also be done albeit more challenging. 
The guarantee secret-sharing schemes give is that, an adversary in possession of shares will not learn anything about the underlying secret, as long as the number of shares it possesses is below the reconstruction threshold. In our instantiation of the framework, we use the replicated secret-sharing scheme from~\citet{CCS:AFLNO16}.

\subsection{Security Model}
\label{sec:sec-model}

\mypara{Corruption.} There are three entities in our system: (1) clients, submitting model updates and validation data, (2) $\mpcnodes$, that work to verify the validity of the updates, and (3) the server that receives the aggregated update. 
We assume that up to half, or $< r/2$, of $\mpcnodes$ are corrupted (\ie honest majority). 
We assume $\cor$ out of $m$ clients are corrupt, and they could collude with $\mpcnodes$.
Since we work in a setting with a large number of clients, we assume that the fraction of corrupt clients is small (\eg $10\%$), as is standard in the literature.
Our framework supports two models of corruption. In the first one, we assume the clients are maliciously corrupt, meaning that both the updates and the validation data could be maliciously crafted, while the $\mpcnodes$ are passively corrupt (semi-honest). The second model is the stronger one to defend against, where we assume $\mpcnodes$ are also maliciously corrupt.

\mypara{Adversary's knowledge.}
We consider a gray-box scenario where the malicious parties know everything about the protocols and parameters in the framework except 1) the benign clients' model update and private data, and 2) the randomness in the local computation of benign clients. 
In addition, each malicious client $\calC_i$ has access to a clean dataset $\bfD_i$ drawn from the underlying distribution---the dataset stored at the client before being compromised.

\mypara{Adversary's goal.} 
First, the adversary attempts to reduce the utility of the honest parties in FL. In this paper, we mainly consider \emph{model poisoning}, in which the malicious clients attempt to maximize the loss of the global model at time $T$ by sending malformed model updates.
Let the first $m_c$ clients $\calC_1, \cdots, \calC_{m_c}$ be malicious without loss of generality. The adversarial goal is
\begin{equation}
\label{obj:adv}
    \max_{\bfu^{(t)}_i, i\in[m_c], t\in[T]}\ \ \  \calL(\bfw^{(T)}, \calD_{test}),
\end{equation}
where $\bfu^{(t)}_i$ is the malformed update submitted by $\calC_i, i\in [m_c]$ at time step $t$.
$\calD_{test}$ is the test distribution and $\calL(\bfw^{(T)}, \calD_{test})$ is the loss of the global model at the final round $T$ on the test distribution. 
Second, the malicious parties also want to undermine the privacy of honest parties by inferring as much information about the private data and/or the model updates of the honest clients.

\subsection{Our Goals}
\label{sec:goals}

\mypara{Robustness of the global model against adaptive attacks.}
In response to the adversarial goal in Equation~\ref{obj:adv}, we seek to preserve the integrity of the global model as specified in the following min-max objective,
\begin{align}
\label{obj:defense}
\min_{\calA}\max_{\bfu^{(t)}_i, i\in[m_c], t\in[T]}\ \ \  \calL(\bfw^{(T)}, \calD_{test}),
\end{align}
That is, we want to find an aggregation mechanism $\calA$ to minimize the test loss of the final global model $\bfw^{(T)}$ under the \textit{strongest} possible adversarial attack.
Specifically, assuming an adversary with knowledge of $\calA$ and control over local training data and local model updates on the malicious clients, one should consider the robustness against the strongest realizable attack devised considering the assumption. 
Accordingly, we introduce an aggregation protocol and report its evaluation result against its adaptive attack.

\mypara{Adaptability to Distribution Shift.} 
The underlying data
distribution of an FL task can change in common scenarios. New clients joining an FL protocol may have slightly different data distribution from existing clients, \eg in health care, new hospitals from a different geographic location with different patient demographics may want to join an existing federation. Clients already in the protocol may also see different data distribution over time, \eg a hospital's patient demographic may shift due to changes in the local population, disease outbreaks, or the introduction of new healthcare programs. 
Consider a global model converged with respect to an original client distribution $\calD_{old}$. 
Let $\calD_{new}$ represent the update distribution, incorporating both original and shifted client data.
Depending on how different $\calD_{new}$ is from $\calD_{old}$, the accuracy of the global model is bound to suffer for a period of time, or may not even be able to adapt to $\calD_{new}$. 
We address this aspect as \emph{adaptability} and report the global model accuracy on the test set sampled from the changing client data distribution over time. 
None of the prior works on robust secure aggregation baselines were designed or evaluated in this context, even though distribution shifts are ubiquitous in real-world FL deployments.
Particularly, the primary limitation of prior approaches stems from their reliance on public datasets for determining aggregation parameters~\cite{CCS:CGJv22}; in PPFL, maintaining a public dataset representative of the evolving data distribution is not straightforward. 
In contrast, our framework provides a way to take a reference of evolving client local data securely, hence naturally leading to adaptive aggregation results.

\mypara{Privacy of the client updates.} Our framework guarantees that even when up to half the aggregation committee and the clients are maliciously corrupted, nothing can be learned based on the \emph{intermediate values} observed during the computation, about the honest clients' updates beyond what is permissible by the algorithm by any other party in the protocol. 

\mypara{Correctness of secure aggregation.} Assuming the MPC protocol instantiated is secure with abort, we guarantee that if the protocol terminates without the adversary aborting, the final aggregate has been computed correctly.

\section{\ours: an Overview}
\label{sec:framework}

At a high level, our framework securely leverages client's private data to verify the integrity of model updates. It comprises two main components: 1) a secure and private cross-client check procedure that computes useful statistics for robustness, \eg the accuracy of a client's model on another client's data, to determine the weight of each model update, and 2) a secure aggregation protocol that calls the aforementioned cross-client check procedure and computes the global model for the next round.

We describe an MPC-computation friendly filtering mechanism based on a simple check score in~\secref{slvr-check}, and proceed to the MPC protocol that facilitate the secure computation in~\secref{slvr-protocol}.

\subsection{Check Procedure for Robustness}
\label{sec:slvr-check}
Our check procedure determines the weight of each client's model update in aggregation. As shown in~\figref{pcheck}, it consists of 1) a check committee selection for each client, 2) a check function/score for each client's update, and 3) a weight assignment to each update.  
At the start of training, the server $\server$ specifies a subroutine that determines the weights of client updates to the MPC nodes $\mpcnodes$. A client $\calC_i$ secret-share a random subset of their training data to $\mpcnodes$, which we refer to as \textit{validation data}, denoted as $\bfD^{\msf{val}}_i$.

\mypara{Check Committee.} Let $\cor$ denote the number of malicious clients. 
$\mpcnodes$ sample a set of $2 \cor + 1$ clients for each client $\calC_i$. This set of clients, denoted by $\calC_{check, i}$, is called the check committee for $\calC_i$. Validation data submitted by these clients will be used to check $\calC_i$'s model. The size $2\cor+1$ guarantees an honest-majority in $\calC_{check, i}$.

\mypara{Check Score.} Intuitively, a check score should reflect how much positive contribution a client's update make to the global objective. Let $\score{i}{j}$ be the score assigned to $\calC_i$.
A choice of check score can be defined as,
\begin{align}
\label{eqn:check_acc}
    &\score{i}{j} = \msf{Chk}_{acc}(\bfw_i, \bfD^{\msf{val}}_j) - \msf{Chk}_{acc}(\bfw^{(t - 1)}, \bfD^{\msf{val}}_j), \\ \nonumber
    &\msf{Chk}_{acc}(\bfw, \bfD) = \frac{1}{|\bfD|} \sum_{(\bfx_k, y_k) \in \bfD} \indicator{\hat{y}(\bfx_k; \bfw) = y_k}
\end{align}
where $\bfw^{(t - 1)}$ is the global model at the end of the previous round.
$\bfw^{(t)}_i = \bfw^{(t - 1)} + \bfu^{(t)}_i$ is $i$-th client's local model at round $t$ (we omit the superscript $t$ for the client model for brevity).
Let $\bff(\cdot; \bfw)$ be a neural network classifier parameterized by $\bfw$, and 
$\hat{y}(\bfx; \bfw) = \argmax_l \bff_l (\bfx; \bfw)$ is the predicted label given $\bfx$.

Essentially, $\score{i}{j}$ measures the increase of accuracy of $\calC_i$'s new model on $\bfD^{\msf{val}}_j$ over the current global model~\footnote{We note that $\score{}{}$ is not limited to the accuracy difference in Equation~\ref{eqn:check_acc}. See Appendix~\ref{app:softscore} for a `soft' alternative using confidence score, which corresponds to \ours(prob) in Sec~\ref{sec:experiment}.}.
Last, we condense the set of scores $\{\score{i}{j} \}_{j \in \calC_{check, i}}$ into a scalar $\score{i}{}$ as the check score for $\calC_i$'s model. We define $\score{i}{}$ to be the trimmed mean of $\{\score{i}{j} \}_{j \in \calC_{check, i}}$ to enhance numerical stability.

\mypara{Weights Assignment.}
We keep the top $k\%$ of the model updates in the final aggregation, where $k = 1-\cor/m$ is a public threshold.

\mypara{Synergy with Norm Bound.} Our check score and the norm bound are naturally complementary. During the early epochs, validation loss alone may not be the most stable indicator of malicious update; the standard norm bound constraint can effectively limit adversarial impact. As the model converges, the cross-client check becomes increasingly effective since the statistics used in the check are closely associated with the learning objective.

\subsection{Secure Aggregation with Check Results}
\label{sec:slvr-protocol}

\figref{pisecagg} shows the secure computation backbone of \ours. The most critical components are 1) a setup phase to establish the necessary security keys, 2) a secret-sharing ($\PShare$) and a reconstruction protocol ($\PRec$) protocol pair, and 3) MPC subroutines for operations in the check algorithms such as inference ($\FSecInf$), random selection ($\FRand$) and sorting ($\FSort$).
We elaborate our instantiation in Appendix~\ref{app:mpcdetails}.

\begin{protofig}{Protocol $\PSecAgg$}{Protocol for Robust Secure Aggregation}{pisecagg}
\label{proto:ours-acc}

\textbf{Parameters:} MPC nodes $\mpcnodes$, the clients $\calC_1, \ldots, \calC_m$. Number of corrupted clients, $\cor$. The threshold $k$ to select the top-$k$ updates, number of validation datapoints, $d$. \\

\textbf{Setup:} $\mpcnodes$ call $\FRand$ to receive shared pairwise keys, $\kappa_{ij}$, for $i, j \in |\mpcnodes|$ and a key, $\kappa$, shared between all the parties. \\

\textbf{Validation committees:} For each client $\calC_i$, $\mpcnodes$ use their common key, $\kappa$, to sample a sets of clients, $\calC_{check, i}$, where $|\calC_{check, i}| = 2 \cor + 1$. \\

\textbf{Model Updates:} At the start of every round, each client and $\mpcnodes$ run $\PShare$ on its local model, $\vec u_i$, and the validation data $\vec D_i$,  where $|\vec D^{\msf{val}}_i| = d$, to generate $\arith{\vec u_i}, \arith{\vec D^{\msf{val}}_i}$ with $\mpcnodes$. \\

\textbf{Robustness Check and Secure Aggregation:} 
\begin{enumerate}
    \item $\mpcnodes$ run $\PCheck$ with input $\vec w_i$, the global model from the previous round, $\vec w$, and $\vec D^{\msf{val}}_i$, for $i \in [m]$. 
    \item They receive $\arith{b_i}$, for $i \in [m]$, where $b_i = 1$ if $\calC_i$'s model has passed the check, and 0 otherwise.
    \item Compute $\arith{\vec w'_i} = \arith{\vec w_i} \cdot \arith{b_i}$, by running $\PMult$.
    \item Locally compute $\arith{\vec w_{\msf{aggr}}} = \Sigma_{i = 1}^m \arith{\vec w'_i}$.
    \item Reconstruct $\vec w_{\msf{aggr}}$ by running $\PRec$.
\end{enumerate}

\end{protofig}

\mypara{Security and Privacy Properties.} Security and privacy are straightforward to argue for our protocol, as the protocol is only composed of calls to the ideal functionalities. The parties do not communicate messages with each other outside of these calls, and only perform local computation. Our choice of the check function, $\msf{Chk}$, does not impact privacy as the result of the check is never revealed to the parties. The results are only used to obliviously compute the weight multiplier for each client's model update, and the final aggregated model is revealed to the server.
We formally state the following theorem, and give a proof sketch in \appref{proofs}.

\begin{protofig}{Protocol $\PCheck$}{Protocol for Robustness Check}{pcheck}
\label{proto:pcheck}

The protocol receives a set of client updates to check, $\vec u_i$, and validation datasets $\vec D^{\msf{val}}_i$, for $i \in [m]$, to check against. $\lambda$ is a public multiplier for the bound. \\

\textbf{Norm Check:} To compute the norm bound, $\mpcnodes$ do the following,
\begin{enumerate}
    \item Compute $\arith{\msf{norm}_i} = \sqrt{\arith{\vec u_i^2}}$ by running $\PMult, \PSqrt$. Call $\FSort$ on the vector $\arith{\msf{norm}} = \{ \arith{\msf{norm}_i} \}$ for $i \in [m]$. 
    \item Set $\msf{bound}$ = $\lambda * \msf{Median}(\{ \msf{norm}_i \})$, for $i \in [m]$. 
    \item Compute the vector $\{ \arith{b^{\msf{norm}}_i} \}$ via $\FComp$, for $i \in [m]$, where $b^{\msf{norm}}_i = 1$ if $\msf{norm}_i < \msf{bound}$, and 0 otherwise.
\end{enumerate}

\textbf{Accuracy Check:} For each client, $\calC_i$,
\begin{enumerate}
    \item Call $\FSecInf$ with $(\vec w_i, \vec D^{val}_j)$, and $(\vec w, \vec D^{val}_j)$, to receive $\arith{\Acc{i}{j}}$ and $\arith{\Acc{i}{\msf{t - 1}, j}}$ respectively, where $\vec w$ is the global model from the previous round, for $j \in \calC_{check, i}$.
    \item $\mpcnodes$ locally compute $\arith{\diff{i}{j}} = \arith{\Acc{i}{j}} - \arith{\Acc{i}{t - 1, j}}$.
    \item Call $\FSort$ on $\arith{\diff{i}{j}}$, for $j \in \calC_{check, i}$, to receive a sharing of the sorted scores.
    \item Let $\msf{trim \mhyphen diff}^j_i$, for $j \in \cor + 1$, denote the list each $P_k \in \mpcnodes$ obtains by locally trimming the first and last $\cor/2$ elements from the sorted scores. 
    \item Locally compute $\arith{\msf{scr}_i} = \Sigma_{j = 1}^{\cor + 1} \arith{\msf{trim \mhyphen diff}^j_i} / (\cor + 1)$. 
\end{enumerate}

\textbf{Select top-$k$:}
\begin{enumerate}
    \item Call $\FSort$ on the list $(i, \arith{\score{i}{}})$ to sort on the scores. Call $\FZeroOne$ with $k$ to receive $k$ shares of one and $m - k$ shares of zero. Set the top $k$ shares to be shares of one and the rest to be shares of zero, received from $\FZeroOne$.
    \item Call $\FSort$ on $(i, \arith{\score{i}{}})$, to sort on the indices, $i \in [m]$. 
    \item Compute $\arith{b_i} = \arith{\score{i}{}} \cdot \arith{b^{\msf{norm}}_i}$, for $i \in [m]$ via $\PMult$, and output the vector.
\end{enumerate}

\end{protofig}

\begin{theorem}\label{thm:secagg}
    Protocol $\PSecAgg$ securely realises the functionality $\FSecAgg$ in the presence of a malicious adversary that can statically corrupt up to $\cor < m/2$ parties in the protocol, in the $(\FSecInf, \FSort, \FRand)$-hybrid model.
\end{theorem}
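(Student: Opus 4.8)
The plan is to give a simulation-based proof in the standard real/ideal paradigm, exploiting the fact that $\PSecAgg$ performs no direct party-to-party communication: every message the adversary observes is either a share handed to it by one of the subprotocols/functionalities ($\PShare$, $\FRand$, $\FSecInf$, $\FSort$, $\FComp$, $\FZeroOne$, $\PMult$, $\PSqrt$) or the opening produced by $\PRec$. I would first fix the ideal functionality $\FSecAgg$: on input the (possibly adversarial) updates $\vec u_i$ and validation data $\vec D^{\msf{val}}_i$, it internally runs the committee selection, the norm and accuracy checks of $\PCheck$, forms the weights $b_i$, and returns only the aggregate $\vec w_{\msf{aggr}} = \sum_i b_i \vec w_i$ to $\server$, with an abort option for the adversary. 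The goal is to construct a PPT simulator $\Sim$ such that for every static malicious $\Adv$ corrupting up to $\cor < m/2$ clients and a minority ($< r/2$) of $\mpcnodes$, the joint distribution of $\Adv$'s view and the honest outputs in the $(\FSecInf, \FSort, \FRand)$-hybrid execution is indistinguishable from that in the ideal execution with $\Sim$ and $\FSecAgg$.

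Next I would build $\Sim$ phase by phase, mirroring the protocol. For the \textbf{setup} and \textbf{committee} phases, $\Sim$ emulates $\FRand$ by sampling the keys $\kappa_{ij}, \kappa$ itself and handing the corrupted nodes their shares, so committee sampling is reproduced identically. For \textbf{model updates}, $\Sim$ relies on the fact that an LSSS below its reconstruction threshold is perfectly hiding: for each honest client it feeds the corrupted nodes uniformly random shares, while for each corrupted client it \emph{extracts} the effective input from the shares submitted through $\PShare$ (using the honest nodes' shares) and forwards the extracted inputs to $\FSecAgg$. For the \textbf{check and aggregation} phase, since every step is a call to an ideal functionality or a local linear operation, $\Sim$ simply returns fresh random shares as the functionalities' outputs; crucially, the check bit $b_i$ and the scores $\score{i}{}$ are never opened, being consumed only inside $\PMult$, so no intermediate value constrains the simulation. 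Finally, for $\PRec$, $\Sim$ obtains $\vec w_{\msf{aggr}}$ from $\FSecAgg$ and picks the honest parties' opening shares so that they reconstruct exactly this value, consistent with the random shares already shown to $\Adv$.

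The indistinguishability argument then proceeds by a hybrid over the functionality calls. I would invoke (sequential/UC) composition to replace each hybrid call by its instantiating subprotocol, each assumed secure with abort and hence simulatable, so the composed view is indistinguishable from the real one. Within the hybrid model the argument is information-theoretic: every share the adversary sees is uniform given the $<r/2$ shares it holds, and the single genuine opening $\vec w_{\msf{aggr}}$ matches the ideal output by construction. Abort behavior carries over because $\Sim$ forwards any abort signaled during a subprotocol call to $\FSecAgg$.

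The main obstacle I anticipate is \textbf{input extraction and consistency under malicious $\mpcnodes$}, rather than the largely mechanical share-simulation. Concretely, when the MPC nodes may deviate, I must argue that (i) the shares a corrupt client submits through $\PShare$ determine a well-defined input that $\Sim$ can extract, and that no later deviation lets the adversary cause an honest client's update to be interpreted inconsistently across the $\FSecInf$ and $\FSort$ calls; and (ii) a malicious node cannot trigger a \emph{selective} abort whose occurrence depends on honest inputs, which would leak information the ideal world does not. Both are handled by relying on the assumed malicious-with-abort security (and input independence) of the underlying replicated-sharing protocols of \citet{CCS:AFLNO16}, together with the honest-majority thresholds ($\cor < m/2$ clients guaranteeing honest check committees $\calC_{check,i}$, and $<r/2$ corrupt nodes guaranteeing share privacy), so that every extracted input is fixed before the checks are evaluated and any abort is input-independent.
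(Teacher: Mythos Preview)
Your proposal is correct and follows essentially the same approach as the paper: a simulation-based argument in the hybrid model that exploits the fact that $\PSecAgg$ consists only of calls to the ideal functionalities plus local computation, so the simulator emulates $\FRand$ to fix the keys and committee sampling, extracts corrupt clients' inputs from their shares, feeds random shares for honest clients' secret values, and forwards aborts to $\FSecAgg$. Your write-up is in fact more detailed than the paper's own proof sketch, which does not spell out the hybrid sequence, the handling of $\PRec$, or the selective-abort concern you raise; the paper simply observes that parties ``do not communicate messages with each other outside of these calls'' and invokes the simulators for $\FRand$ and $\FSecInf$ while emulating $\FSort$ internally.
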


\textbf{Output privacy:} We consider output privacy to be out of scope of this work. MPC formally only guarantees privacy of the inputs, and that the intermediate values do not leak any additional data than what is already allowed by the algorithm. Since all the parties in the system learn the aggregated global model at the end of every round of training, it is possible for an adversary to try to reverse engineer the model to infer data about the honest clients. A popular approach to mitigating this is using differential privacy, which involves sampling and adding noise to the model updates. We note that it is an interesting future direction to combine differential privacy with our framework.

\section{Adaptive Attacks}
\label{sec:adaptiveattack}

In order to capture the constantly evolving real-world threats, we evaluate our framework against an \emph{adaptive} adversary which knows the robust aggregation and the check algorithms. Its objective can be characterized as,
\begin{align}
\label{obj:adaptiveadv}
    & \max_{\substack{\bfw^{(t)}_i, \ \bfD^{\msf{val}}_i, \\ \forall i\in[\cor], t\in[T]}}\ \ \  \calL(\bfw^{(T)}, \calD_{test}),
\end{align}
\ie the adversary can manipulate both the \emph{client updates} and the \emph{validation dataset} at the corrupted clients to maximize the loss of the global model on clean test data. 

\mypara{Malicious Model Updates.} The generation of malicious model updates can be formulated as a constrained optimization problem: the attacker searches for the updates that diverge maximally from the learner's goal subject to the constraints imposed by the defense mechanism. We adopt the attack framework in~\citet{fang2020local} and instantiate attacks for all defense baselines (see Appendix~\ref{app:adaptive-attack}).

\mypara{Malicious Validation Data.}
A new attack surface for \ours\ is that the adversary can also manipulate malicious clients' validation data (\ie $\bfD^{\msf{val}}_i, \forall i \in [\cor]$). 
We evaluate the robustness of \ours\ against this new threat by considering the most potent attacker---each malicious client $\calC_{adv}$ can \emph{arbitrarily} manipulate the check score if the validation data is provided by $\calC_{adv}$. 
We propose a strong manipulation strategy called the extreme manipulation and prove its optimal attack strength as the following. 

\begin{definition}[Extreme Check Score Manipulation]
\label{def:optimalattack}
When $\calC_{j}$ is malicious, a check score manipulation is \emph{extreme} if
\begin{equation}
\label{eqn:extreme}
\score{i}{j} =
\begin{cases}
1 &\mbox{ if $\calC_i$ is malicious,}\\
-1 &\mbox{ otherwise.}
\end{cases}
\end{equation}
\end{definition}

\begin{theorem}(Optimality of the Extreme Manipulation.)
\label{thm:optimal}
    Let $\bfw^{(T)}_{adv}$ be a model that maximizes the objective in Formula~\ref{obj:adaptiveadv} and is obtainable by some combination of malicious model updates and validation data set manipulation. Then there must be a set of malicious model updates that lead to $\bfw^{(T)}_{adv}$ under the extreme check score manipulation in Definition~\ref{def:optimalattack}.
\end{theorem}

\begin{proof}
    First, we observe that the range of $\scr(i,j)$ in our protocol is from $-1$ to $1$.
    Consider a malicious $\calC_i$. When $C_j$ is malicious, $\scr(i, j)$ equals 1, which is the largest possible. When $C_j$ is benign, $\scr(i, j)$ is independent to malicious data manipulation. As a result, $\scr_i$ for a malicious client $\calC_i$ under the extreme manipulation is no less than under other manipulation. Similarly, $\scr_i$ for a benign client under the extreme manipulation will be no more than that under other manipulation. Therefore, the ranking of $\scr_i$ for a malicious client $\calC_i$ will never drop when switching from a non-extreme manipulation to extreme manipulation.

    Let $\bfw_{adv}$ be a malicious model update from $\calC_i$ in the sequence that leads to $\bfw^{(T)}_{adv}$ under some non-extreme validation data manipulation.
    \begin{enumerate}
        \item
        If $\bfw_{adv}$ is included in aggregation, then $\scr_i$ must already be in the top $k\%$ among all clients, $k$ being the threshold defined in the protocol. Since switching to the extreme manipulation will never reduce the ranking of a malicious $\scr_i$, $\bfw_{adv}$ will remain in aggregation. 
        \item
        If $\bfw_{adv}$ is not included in aggregation, then the attacker can send any $\bfw'_{adv}$ that will be excluded from aggregation as well.~\footnote{If such a $\bfw'_{adv}$ does not exist, i.e., all updates from the malicious client will be accepted, then we allow the attacker to refrain from submitting the updates. The extreme manipulation serves to check the robustness lower bound of our work. Therefore, we allow it more flexibility, i.e. more attacker strength.}
    \end{enumerate}
    Therefore, an adversary using extreme manipulation can always construct malicious model updates that have the same impact in each aggregation step, and thus obtain $\bfw^{(T)}_{adv}$.
\end{proof}

\mypara{Robustness Lower Bound.} We acknowledge that in practice, the adversary may not find an actual validation set that achieves the conditions in Equation~\ref{eqn:extreme}. Nevertheless, we equip the adaptive attack against our framework with the extreme check score manipulation in our empirical study in Section~\ref{sec:experiment}.
We implement the extreme manipulation by allowing the adversary to directly set the value of $\scr(i, j)$ when $\calC_j$ is a malicious client. By giving this additional power to the adversary, we can find an empirical \emph{lower} bound to the robustness of our work. 
We also note that it is still possible for an attacker to craft validation sets to boost scores for malicious clients while lowering those of benign ones in reality. 
They may flip labels to disrupt benign updates or embed backdoors that trigger high scores for malicious updates.
Thus, our empirical lower bound serves as a practical robustness indicator.

\section{Experiments}
\label{sec:experiment}

In this section, we employ experiments and show that~\ours~has (1) better robustness in the presence of malicious clients (Section~\ref{sec:empirical-robustness}; Table~\ref{tab:robustness}), (2) better adaptability to the change in the underlying distribution of client data (Section~\ref{sec: empirical-adaptability}; Figure~\ref{fig:adaptability}), and (3) reasonable computation and communication overhead that can be significantly improved via parallelism (Section~\ref{sec: mpc-benchmark}; Table~\ref{tab:mpc-benchmark}).

\subsection{Setup}
\label{sec:exp-setup}

We briefly describe the experimental setup following the conventions in the FL literature~\cite{fang2020local,CCS:CGJv22}.
See Appendix~\ref{app:setup} for full descriptions.

\mypara{Models and Datasets.} We consider the following model architectures and datasets: LeNet-5 for MNIST and Fashion-MNIST (FMNIST), ResNet-18 for SVHN and CIFAR-10.
We follow the given train vs. test set split and further split each train set, balanced across classes. We follow the setup in~\cite{CCS:CGJv22} and reserve random 10k training samples as the public validation set ($\calD_{pubval}$) for aggregation baselines that leverage public validation datasets. The remaining training samples are partitioned across clients by a Dirichlet distribution $\text{Dir}_K(\alpha)$ with $\alpha = 0.5$ to emulate data heterogeneity in realistic FL scenarios~\cite{cao2020fltrust}.
We assume there are 100 clients where $10\%$ or $20\%$ of them are malicious clients.

\mypara{Defense baselines.} We compare the performance of \ours with the following aggregation methods that are most commonly considered in Byzantine-robust secure FL literature: 
\begin{enumerate}
    \item \textit{Norm Bound (adaptive)}~\cite{rofl, elsa} accepts a client update if the update is bounded by a $\tau$: $\msf{Chk}(\bfu_i) = \indicator{||\bfu_i^{(t)}||_2 < \tau}$.
    At the start of the $t$th communication round, the server computes the threshold as $\tau = \lambda \times \text{median}(\bfu_1^{(r-1)}, \cdots, \bfu_m^{(t-1)})$ adaptively, and then broadcast it to all clients. Hence, each client can submit the norm-bound check result based on the threshold value.

    \item \textit{Norm Bound ($\calD_{pubval}$ or public data)}~\cite{CCS:CGJv22} is identical to the above Norm Bound (adaptive) except that the threshold is computed by referring to the public validation dataset: $\tau = ||\bfu_{pubval}||_2$ where $\bfu_{pubval}$ is the previous round's global model update computed with the public validation data $\calD_{pubval}$. $\tau$ can be computed by anyone participating in the communication.

    \item \textit{Norm Ball}~\cite{steinhardt2017normball,CCS:CGJv22} accepts a client update if the update is within a spherical radius from the reference %
    update computed using the public validation dataset: $\msf{Chk}(\bfu_i, \calD_{pubval}) = \indicator{||\bfu_i^{(t)} - \bfu_{pubval}||_2 < \tau}$ where $\tau = \lambda \times ||\bfu_{pubval}||_2$. $\bfu_{pubval}$ and the value for $\tau$ is available to anyone participating in the communication.

    \item \textit{Cosine Similarity}~\cite{cao2020fltrust,CCS:CGJv22} checks the cosine similarity between each client update and the global model update from the previous round $\bfu^{(t-1)}$: $\msf{Chk}(\bfu_i, \calD_{pubval}) = \indicator{\cos(\bfu_i^{(t)}, \bfu^{(t-1)}) < \tau}$ where $\cos(u, v) = \frac{<u, v>}{||u||_2 ||v||_2}$. The threshold is computed as $\tau = \lambda \times \cos(\bfu_{pubval}, \bfu^{(t-1)})$, and again, can be computed by anyone in the communication.
\end{enumerate}

$\lambda$ is a constant where a larger $\lambda$ allows for more gradient updates, leading to faster convergence but reduced robustness.

\mypara{Attack baselines.}
We evaluate each aggregation method against the following attacks:
\begin{enumerate}
    \item \textit{Additive Noise}~\cite{li2019additive} adds Gaussian noise to a malicious client update.
    \item \textit{Sign Flipping}~\cite{damaskinos2018signflipping} flips the sign of a client update: $\bfu'_i = - \bfu_i$, for a client $\calC_i$.
    \item \textit{Label Flipping}~\cite{fang2020local} is a data poisoning attack that flips the label of each training instance. Specifically, it flips a label $l \in \{0, 1, \dots, L-1\}$ into $L-l-1$ where $L$ is the number of classes.
    \item \textit{Adaptive Attack} is considered the strongest attack adaptive to each aggregation as described in Section~\ref{sec:adaptiveattack}.
\end{enumerate}

\definecolor{lightgreen}{gray}{0.9}
\definecolor{lightblue}{RGB}{217, 234, 243}
\definecolor{lightgreen}{RGB}{226, 240, 217}
\definecolor{lightyellow}{RGB}{255, 249, 219}
\DeclareRobustCommand{\hlgreen}[1]{\colorbox{lightgreen}{#1}}
\DeclareRobustCommand{\hlblue}[1]{\colorbox{lightblue}{#1}}

\begin{table}[t]
\tiny
\caption{\textbf{Test accuracy against adversarial attacks by 10\% malicious clients after 500 rounds.} Robustness is determined by the lowest accuracy (\textbf{boldfaced}) across all attacks (\ie the smallest value across the columns of each table). \ours\ maintains its robustness even against the strongest adaptive attacks (``Adaptive'').}
    \subfloat[LeNet-5 model, MNIST dataset]{
    \begin{adjustbox}{width=.7\textwidth,center}
    \begin{tabular}{c|cccc}
    \cline{1-5}
        & Additive Noise & Labeflip & Signflip & Adaptive  \\ \hline \hline
        Norm Bound (adaptive) & 0.970 & 0.983 & 0.981 & \bf 0.883 \\
        Norm Bound (public data) & 0.976 & 0.976 & 0.976 & \bf 0.454 \\
        Norm Ball & 0.968 & 0.985 & 0.966 & \bf 0.965 \\
        Cosine Similarity & 0.981 & 0.964 & 0.947 & \bf 0.863 \\
        \ours (acc) & 0.974 & 0.977 & 0.975 & \cellcolor{lightgreen}\bf 0.974\\
        \ours (prob) & 0.973 & 0.977 & 0.974 & \cellcolor{lightgreen}\bf 0.970 \\
        \cline{1-5}
    \end{tabular}
    \end{adjustbox}
    }
    \vspace{2em}

    \subfloat[LeNet-5 model, FMNIST dataset]{
    \begin{adjustbox}{width=.7\textwidth,center}
    \begin{tabular}{c|cccc}
    \cline{1-5}
        & Additive Noise & Labelflip & Signflip & Adaptive \\ \hline \hline
        Norm Bound (adaptive) & 0.872 & 0.873 & 0.867 & \bf 0.709\\
        Norm Bound (public data) & 0.865& 0.870& 0.863 & \bf 0.583\\
        Norm Ball & 0.870 & 0.872& 0.866 & \bf 0.798\\
        Cosine Similarity & 0.863& 0.862& 0.862 & \bf 0.714\\
        \ours (acc) & 0.870& 0.873& 0.865 & \cellcolor{lightgreen}\bf 0.830\\
        \ours (prob) & 0.870& 0.871& 0.864 & \cellcolor{lightgreen}\bf 0.832\\
        \cline{1-5}
    \end{tabular}
    \end{adjustbox}
    }
    \vspace{2em}

    \subfloat[ResNet-20 model, SVHN dataset]{
    \begin{adjustbox}{width=.7\textwidth,center}
    \begin{tabular}{c|cccc}
    \cline{1-5}
        & Additive Noise & Labeflip & Signflip & Adaptive \\ \hline \hline
        Norm Bound (adaptive) & 0.915 & 0.932 & 0.927 & \bf 0.198 \\
        Norm Bound (public data) & 0.921 & 0.923 & 0.919 & \bf 0.189\\
        Norm Ball & 0.913 & 0.917 & 0.901 & \bf 0.308 \\
        Cosine Similarity & 0.844 & 0.919 & 0.886 & \bf 0.431\\
        \ours (acc) & 0.868 & 0.925  & 0.922 & \cellcolor{lightgreen}\bf 0.836\\
        \ours (prob) & 0.874 & 0.927 & 0.920 & \cellcolor{lightgreen}\bf 0.855 \\
        \cline{1-5}
    \end{tabular}
    \end{adjustbox}
    }
    \vspace{2em}

    \subfloat[ResNet-20 model, CIFAR-10 dataset]{
    \begin{adjustbox}{width=.7\textwidth,center}
    \begin{tabular}{c|cccc}
    \cline{1-5}
        & Additive Noise & Labeflip & Signflip & Adaptive \\ \hline \hline
        Norm Bound (adaptive) & 0.778 & 0.791& 0.789 & \bf 0.218 \\
        Norm Bound (public data) & 0.784& 0.786& 0.799 & \bf 0.122\\
        Norm Ball & 0.771& 0.798& 0.779 & \bf 0.291\\
        Cosine Similarity & 0.770 & 0.781 & 0.775 & \bf 0.221\\
        \ours (acc) & 0.772& 0.781& 0.776 & \cellcolor{lightgreen}\bf 0.733\\
        \ours (prob) & 0.775& 0.784& 0.772 & \cellcolor{lightgreen}\bf 0.735\\
        \cline{1-5}
    \end{tabular}
    \label{tab:robustness-cifar10}
    \end{adjustbox}
    }
\label{tab:robustness}
\end{table}

\subsection{Robustness against Poisoning Attacks}
\label{sec:empirical-robustness}

\mypara{\ours~is more Byzantine-robust.}
Table~\ref{tab:robustness} summarizes the robust accuracy of each aggregation method under attacks.
According to our goal in Equation~\ref{obj:defense},
All aggregation methods demonstrate comparable robust accuracies against most cases of non-adaptive, common attack baselines (\ie Additive Noise, Labelflip, and Signflip), varying by only a small margin of $2 \mhyphen 3\%$ accuracy in each column.
However, all other defense baselines suffer from significant accuracy drops under adaptive attacks, and the robustness gap compared to \ours~becomes more pronounced with more complex tasks. For example, while the accuracy gap on MNIST is relatively modest, it widens to approximately 50\% on CIFAR-10.
In other words, unlike other aggregation baselines that are more susceptible to certain attacks (\eg Non-adaptive vs. Adaptive) or fail on certain data sets (\eg Norm Ball on CIFAR-10), \ours~shows consistent robustness against all attacks on all data sets.
Such consistency makes it a more desirable choice in real-world uncertainty. An adversary may employ the strongest possible attack, which is the adaptive attack in most cases, it is crucial to provide robustness even under those attacks; specifically, as in Figure~\ref{fig:adaptiveattack}, in the worst case, the adversary can only pull down the performance of FL with \ours~to ~97\% on MNIST, ~87\% on FMNIST, ~83\% on SVHN, and ~72\% on CIFAR-10 after 1000 rounds.

\begin{figure*}[ht]
\captionsetup[subfigure]{justification=centering}
    \centering
    \begin{subfigure}{0.24\textwidth}
        \includegraphics[width=\textwidth]{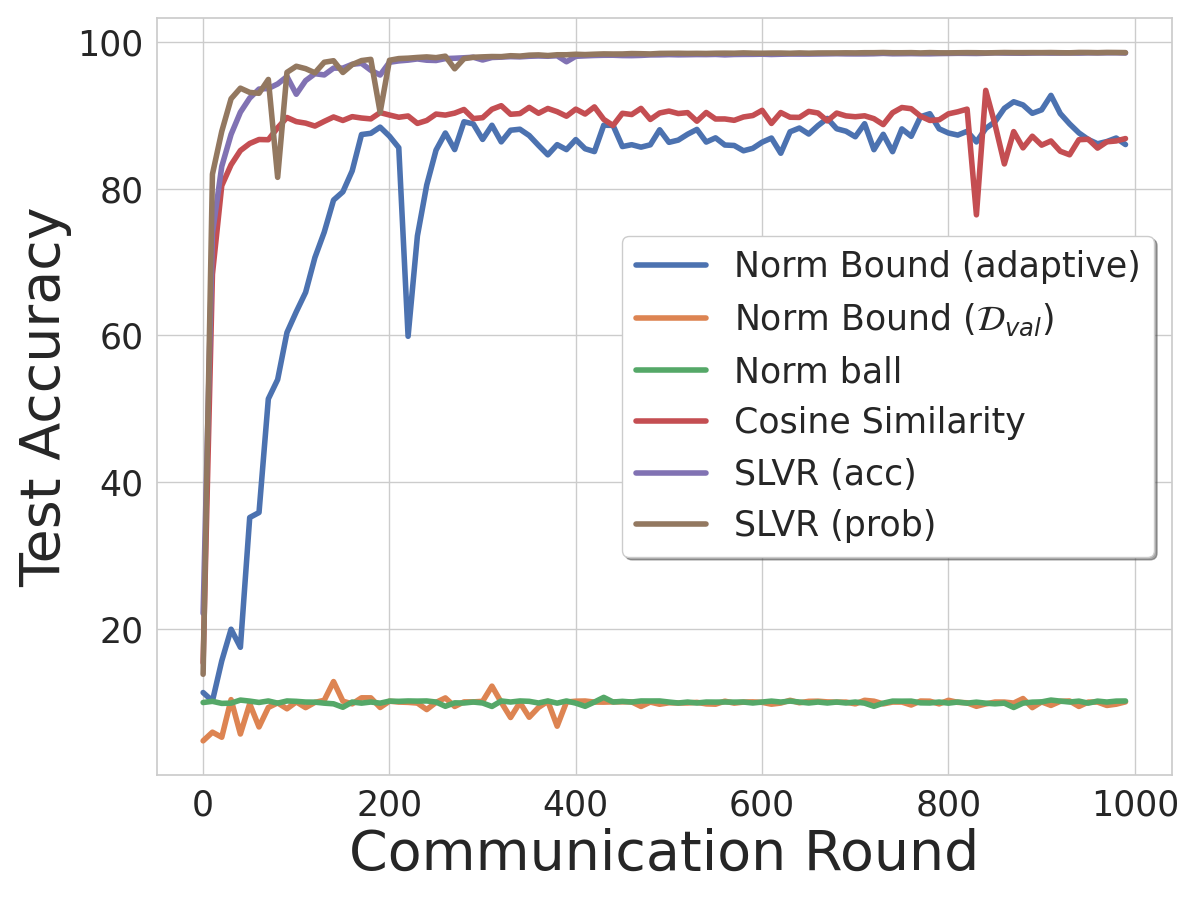}
        \caption{LeNet-5, MNIST}
        \label{subfig:adaptive-mnist}
    \end{subfigure}
    \hspace{.2mm}
    \begin{subfigure}{0.24\textwidth}
        \includegraphics[width=\textwidth]{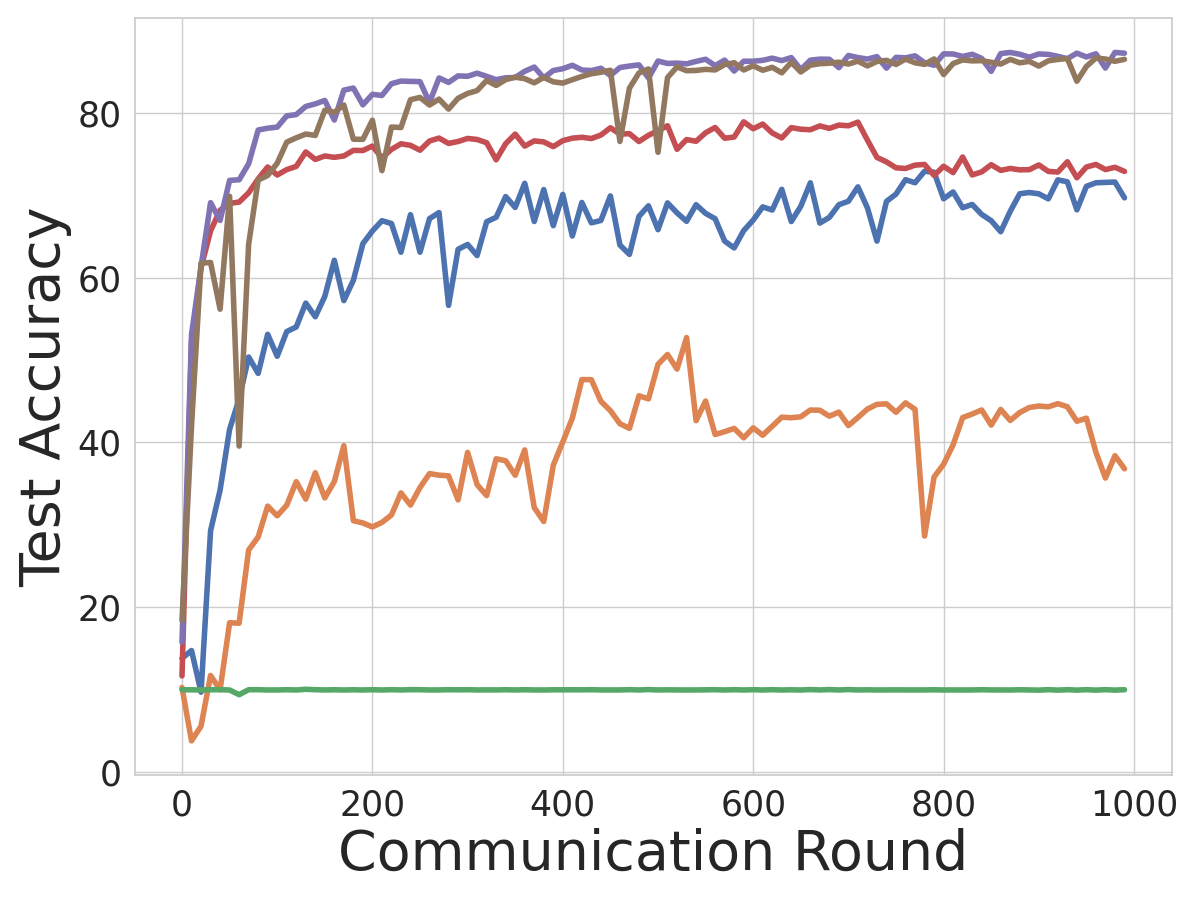}
        \caption{LeNet-5, FMNIST}
        \label{subfig:adaptive-fmnist}
    \end{subfigure}
    \hspace{.2mm}
    \begin{subfigure}{0.24\textwidth}
        \includegraphics[width=\textwidth]{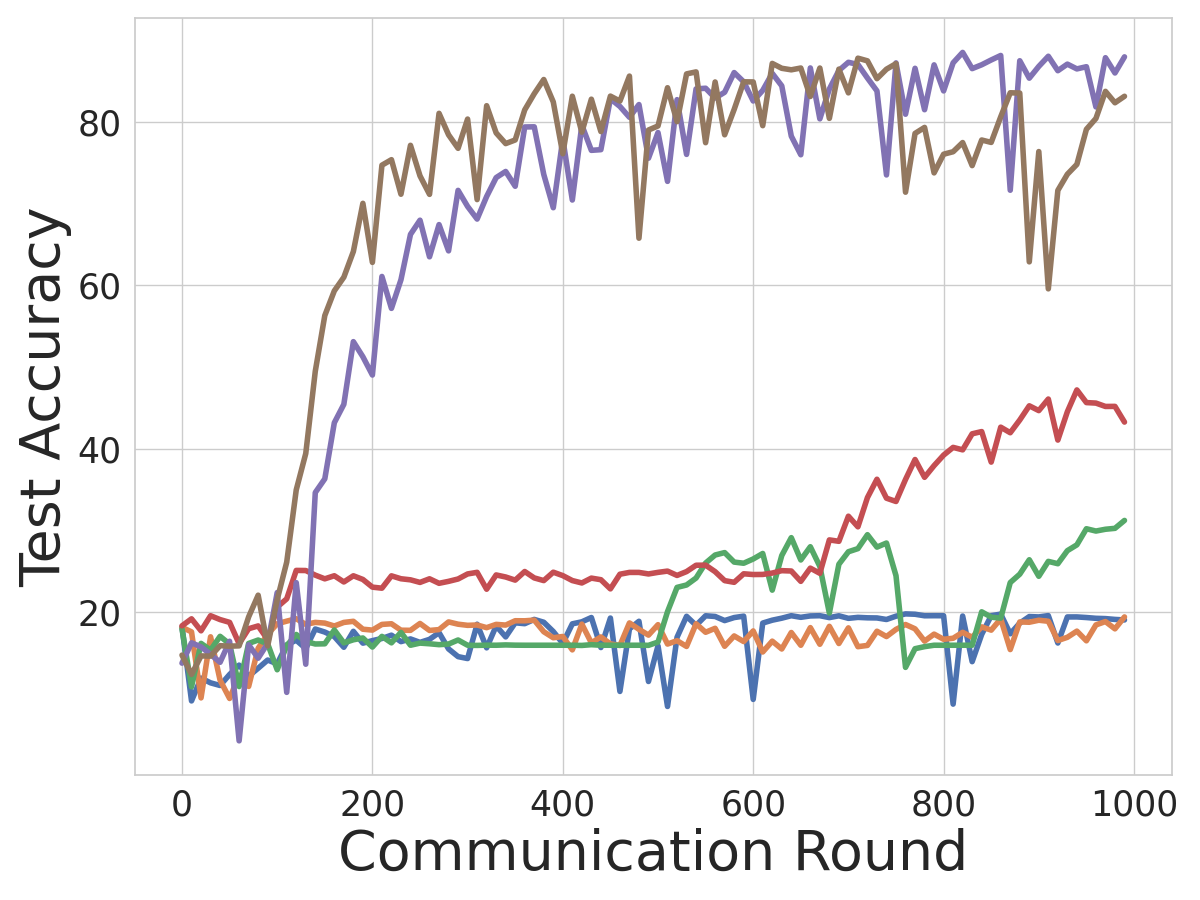}
        \caption{ResNet-20, SVHN}
        \label{subfig:adaptive-svhn}
    \end{subfigure}
    \hspace{.2mm}
    \begin{subfigure}{0.24\textwidth}
        \includegraphics[width=\textwidth]{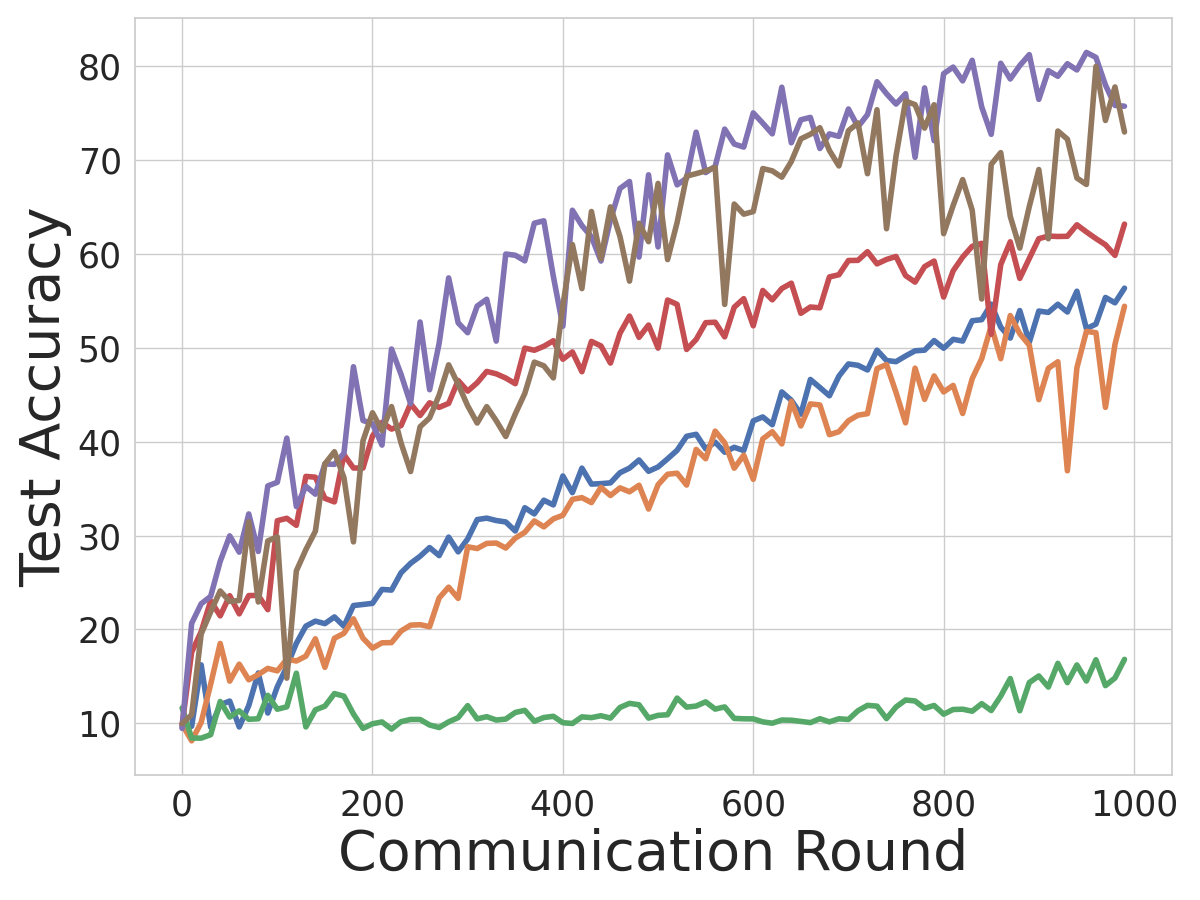}
        \caption{ResNet-20, CIFAR-10}
        \label{subfig:adaptive-cifar10}
    \end{subfigure}
    \caption{\textbf{Robustness against Adaptive Attack with 20\% malicious clients.} Both \ours~(acc) and \ours~(prob) demonstrate competitive convergence performance against adaptive attacks. In contrast, other baselines that rely on static public validation datasets (\eg Norm Bound ($\calD_{val}$), Norm Ball, Cosine Similarity) or those that are adaptive but rely on simple validation checks (\eg Norm Bound (adaptive)) become vulnerable.
    }
    \label{fig:adaptiveattack}
\end{figure*}

\begin{figure*}[t]
\captionsetup[subfigure]{justification=centering}
    \centering
    \begin{subfigure}{\textwidth}
        \centering
        \begin{subfigure}{0.49\textwidth}
            \centering
            \includegraphics[width=0.48\textwidth]{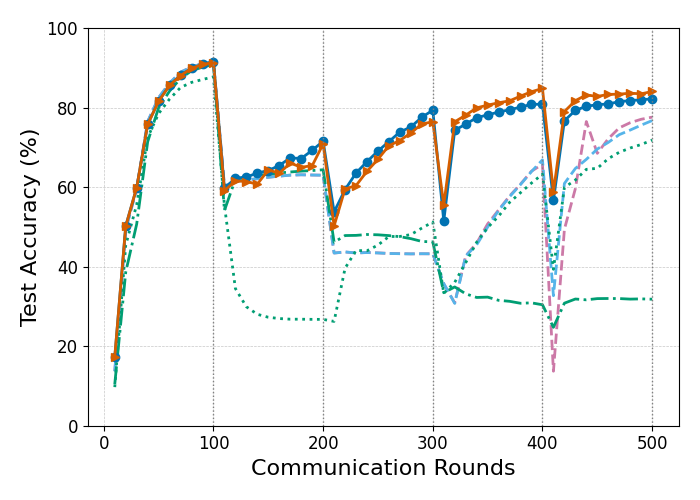}
            \includegraphics[width=0.48\textwidth]{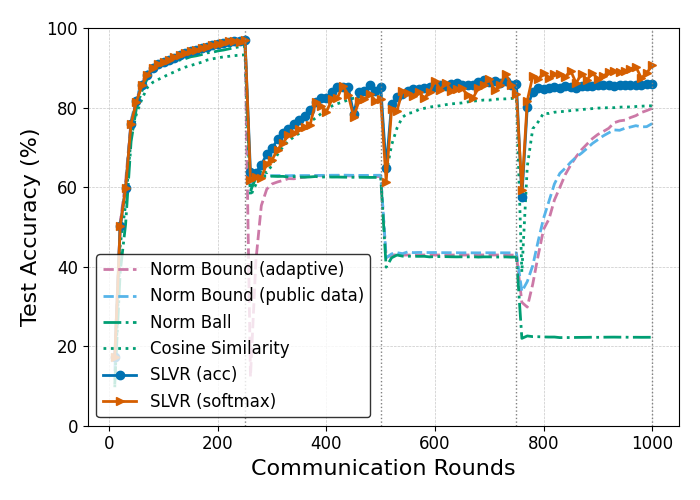}
            \caption{Scenario 1: evolving clients}
            \label{subfig:scenario-evolving}
        \end{subfigure}
        \hfill
        \begin{subfigure}{0.49\textwidth}
            \centering
            \includegraphics[width=0.48\textwidth]{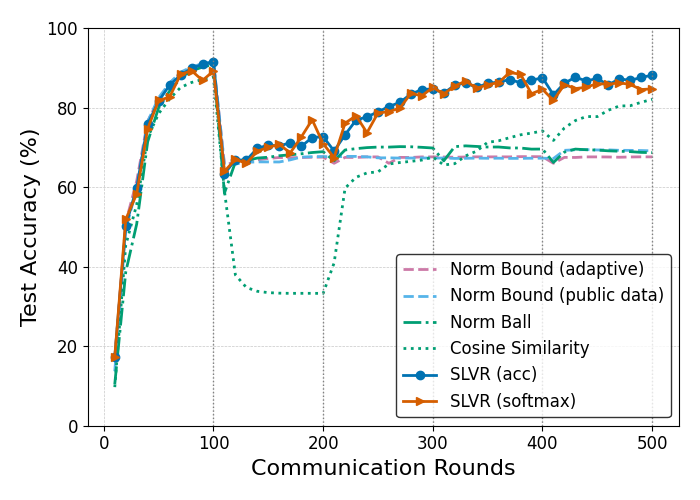}
            \includegraphics[width=0.48\textwidth]{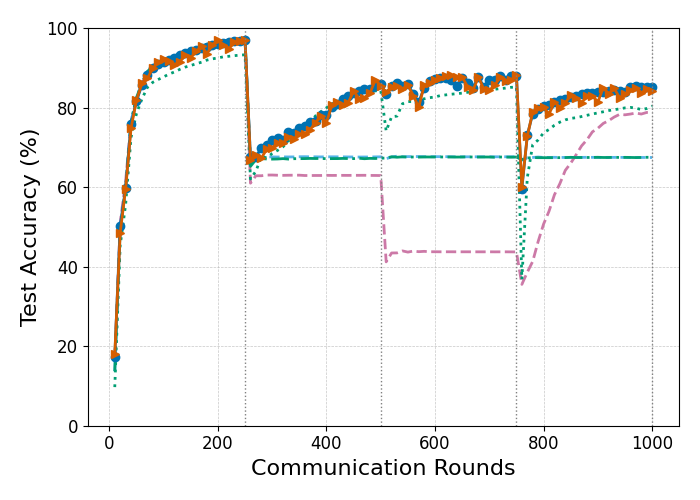}
            \caption{Scenario 2: new clients}
            \label{subfig:scenario-new}
        \end{subfigure}
    \end{subfigure}
    
    \caption{\textbf{Adaptability across the two distribution shift scenarios.} We start with 100 clients working on MNIST data. Every 100 (or 250) rounds (marked by dotted lines), either 20 random clients transition to SVHN data (\ref{subfig:scenario-evolving}), or an additional 20 clients with SVHN data join the communication (\ref{subfig:scenario-new}). Our model consistently adapts and improves accuracy throughout the communication rounds, outperforming other aggregation protocols that struggle to adjust to distribution shifts.}
    \label{fig:adaptability}
\end{figure*}

\subsection{Adaptability to Distribution Shifts}
\label{sec: empirical-adaptability}

In this section, we examine the adaptability of \ours~along with other baseline aggregation protocols in the face of distribution shifts.
We take into account two practical scenarios that may frequently occur in real-world FL settings:

\begin{enumerate}
    \item \textbf{Scenario 1: evolving clients} (Figure~\ref{subfig:scenario-evolving}).
    Some clients involved in the communication send different data over time to the central server. Consider a cross-silo FL system for predicting patient outcomes in hospitals. Over time, a hospital's patient demographic may shift due to various factors such as changes in the local population, disease outbreaks, or the introduction of new healthcare programs. This could lead to an evolution in the type of patient data the hospital, or a subset of hospitals send to the central server, reflecting these changes.

    \item \textbf{Scenario 2: new clients} (Figure~\ref{subfig:scenario-new}). Along with the existing clients, new clients join the communication, but their data differs slightly from the existing clients. For instance, in the above FL example for a healthcare application, the existing clients could be hospitals from a different geographic location, or specialized hospitals and clinics that want to join the system to solve the same task. The new clients' data coming from such sources may differ due to their specialized nature of care, different patient demographics, or unique healthcare practices.
\end{enumerate}

We simulate these scenarios using MNIST for existing clients and SVHN for evolving or new clients, with LeNet-5 as the model. The server initially communicates with 100 MNIST clients and uses a static MNIST validation dataset. Distribution shifts occur every 100 or 250 rounds, where clients transition to SVHN (Scenario 1) or new SVHN clients join (Scenario 2). We measure global model accuracy on MNIST initially and on the combined MNIST-SVHN test set thereafter. Further details are provided in Appendix~\ref{app:setup}.
We aim to evaluate how cross-client checks contribute to adaptability under changing data distributions. Therefore, we disable the norm-bound check in \ours~and rely exclusively on cross-client checks to address the occurrence of distribution shifts.

\mypara{\ours~can adapt to changing client data distributions.}
Figure~\ref{fig:adaptability} shows the convergence of the global model against the number of communication rounds.
We observe that baseline aggregation protocols relying on a static public validation dataset suffer significant performance degradation under distribution shifts.
Specifically, Norm Bound ($\calD_{pubval}$) and Norm Ball reject every update from SVHN clients because the magnitude of updates computed on SVHN far exceeds the threshold determined on the static MNIST validation set.
The same observation applies to Norm Bound (adaptive) that uses the median of gradient updates to adaptively set the threshold since the gradient updates computed with new SVHN data are likely to fall above the median. One could relax the threshold by multiplying larger constant values to the computed median of gradients, but again, this highlights that appropriate threshold parameter setting is crucial to ensure robustness and adaptability under changing distributions, which requires non-trivial efforts, especially in secure FL settings.
On the contrary, \ours~is capable of dynamically incorporating the new shifted local data into validation. Consequently, its choice of updates balances the performance on both MNIST and SVHN, which leads to much stronger adaptability without any parameter tuning.

\subsection{MPC Benchmarks}
\label{sec: mpc-benchmark}
We benchmark the computation and communication overhead of our MPC implementation. The framework is instantiated using building blocks from MP-SPDZ~\cite{CCS:Keller20}. Benchmarks were conducted on a MacBook Pro with an Apple Silicon M4 Pro processor and 48 GB RAM. We simulated the network setup locally, and ran all the parties on the same machine, with 4 threads. For the LAN case, we simulated a network with 1 Gbps bandwidth, and 1 ms latency, and for WAN, we used 200 Mbps and 20 ms respectively.

We benchmark the three components of our MPC protocol -- the norm bound check, the cross-client check, and selecting the top-$k$ model updates separately. In reality, the normbound and cross-client check can be run in parallel, with top-$k$ run on the outputs of these checks. All the components are instantiated using a semi-honest 3PC protocol from MP-SPDZ, specifically, replicated-ring-party.x which implements the replicated secret-sharing based protocol from \cite{CCS:AFLNO16}. The protocols are run with a 128-bit ring, although the larger ring size compared to the typical 64-bit ring is only to the size of the norm. It is possible to operate over a 64-bit ring, and switch to a 128-bit ring only for the norm, although we do not implement this optimization. 

\begin{table}[b!]
\centering
\caption{\textbf{MPC benchmarks.} With $\cor = 10\%$ corruptions, $2\cor + 1$ validation pairs, validation dataset size of 10.}

\begin{adjustbox}{width=\columnwidth,center}
\begin{tabular}{cc|cc|cc|cc|cc}
\toprule
     & & \multicolumn{2}{c|}{Normbound} & \multicolumn{2}{c|}{Cross-client check (acc)} & \multicolumn{2}{c|}{Cross-client check (prob)} & \multicolumn{2}{c}{Top-k filtering} \\ \cline{3-10}
     & & LAN & WAN & LAN & WAN & LAN & WAN & LAN & WAN \\
     \hline \hline
    \multirow{2}{*}{LeNet-5, 10 clients} & time (s) & 1.22 & 6.73 & 13.14 & 109.62 & 7.10 & 60.67 & 2.64 & 23.46 \\
    & data (MB) & \multicolumn{2}{c|}{26.40} & \multicolumn{2}{c|}{42.32} & \multicolumn{2}{c|}{21.51} & \multicolumn{2}{c}{0.25} \\ \cline{1-10}
    \multirow{2}{*}{LeNet-5, 20 clients} & time (s) & 1.78 & 8.92 & 20.20 & 174.90 & 10.89 & 93.39 & 2.65 & 23.53 \\
    & data (MB) & \multicolumn{2}{c|}{84.10} & \multicolumn{2}{c|}{70.08} & \multicolumn{2}{c|}{35.40} & \multicolumn{2}{c}{0.73} \\ \cline{1-10}
    \multirow{2}{*}{ResNet-18, 10 clients} & time (s) & 207.26 & 418.55 & 261.78 & 2084.90 & 131.30 & 1056.52 & 2.64 & 23.46 \\
    & data (MB) & \multicolumn{2}{c|}{4791.75} & \multicolumn{2}{c|}{2010.03} & \multicolumn{2}{c|}{1098.59} & \multicolumn{2}{c}{0.25} \\ \cline{1-10}
    \multirow{2}{*}{ResNet-18, 20 clients} & time (s) & 370.49 & 818.43 & 429.15 & 3471.01 & 217.45 & 1746.22 & 2.65 & 23.53 \\
    & data (MB) & \multicolumn{2}{c|}{9583.53} & \multicolumn{2}{c|}{2150.21} & \multicolumn{2}{c|}{1137.50} & \multicolumn{2}{c}{0.73} \\
   
\bottomrule
\end{tabular}
\end{adjustbox}

\label{tab:mpc-benchmark}
\end{table}

We report the MPC benchmarks in Table~\ref{tab:mpc-benchmark}. Compared to previous FL schemes, \ours\ unsurprisingly incurs increasing overhead as it supports private collection and computation of richer information via cross-client check. Nevertheless, the checks can be completed in reasonable time. Moreover, a large portion of the computations, \eg $\score{i}{j}$ for different $(i,j)$ pairs, are highly parallelizable, suggesting substantial speed-ups with more powerful infrastructure than our lab computer.

\section{Conclusions and Future Work}
To the best of our knowledge, our work is the first to take advantage of clients' private data via secure MPC for FL. We demonstrate massive potential of this class of checks at enhancing FL robustness and adaptability to distribution shifts, while preserving privacy. Moreover, the capability of securely computing statistics over multiple data sources has its use beyond robust FL. It can help more applications such as data appraisal, personalized FL and etc. 

We do note that our work has significant room for improvement especially in scalability. Currently, scaling our protocol to more clients and larger models or datasets remains computationally intensive. In addition to developing more efficient secure inference techniques, especially for larger models, we hope that more robustness checks based on statistics computed with private client data will be proposed. Integrating our work with differential privacy is also an interesting direction, to provide output privacy. We welcome the community to join our effort.

\bibliographystyle{icml2025}
\bibliography{shrunk, references}

\newpage
\appendix
\onecolumn
\section*{\centering \textbf{Appendices}}

\begin{table}[h]
\begin{adjustbox}{width=\textwidth,center}
\centering
\begin{tabular}{c|l}
\toprule
Symbol  & Description \\ \hline\hline
$m$ & number of all clients \\
$m_c$ & number of malicious clients \\
$t$ & communication round \\
$T$ & total number of communication rounds between the clients and the server \\
$\calS$ & server \\
$\calC_i$ & $i$th client \\
$\bfD_i$ & $i$th client's local training dataset \\
$N_i$ & number of $i$th client's local training data, $N_i = |\bfD_i|$ \\
$\bfu_i^{(t)}$ & $i$-th client's local update \\
$\bfw_i^{(t)}$ & $i$-th client's local model after local training with $\bfD_i$ at $t$ \\
$\bff_{\bfw}(\cdot)$ & a neural network classifier parameterized by $\bfw$ \\
$\mpcnodes$ & a set of MPC nodes \\
$r$ & number of parties in $\mpcnodes$ \\
$\calC_{check, i}$ & check committee of size $2\cor + 1$ for the $i$th client \\
$\bfD^{\msf{val}}_i$ & $i$-th client's validation data (a random subset of $\bfD_i$ that is secret-shared to the MPC nodes) \\
$\score{i}{j}$ & validation score for $i$-th client computed with $j$-th client's validation data\\
$\score{i}{}$ & aggregated validation score for $i$-th client\\
$\msf{Chk}_{acc}$ & check function based on accuracy difference \\
$\msf{Chk}_{prob}$ & check function based on softmax probability outputs \\
$\FSecInf, \FSort, \FMult, \FRand$ & Functionality for secure inference, oblivious sort, multiplication, sampling common random values \\
$\PShare, \PRec$ & Protocol for secret-sharing, reconstruction \\
$k$ & parameter in trimmed mean, $1 - \cor/m$ \\
$\alpha$ & Dirichlet distribution parameter that controls data heterogeneity across clients \\
$\calD_{pubval}$ & public validation data \\
$L$ & number of classes \\

\bottomrule
\end{tabular}
\end{adjustbox}
\caption{Symbols and notations used in the paper. }
\label{tab:notations}
\end{table}

\section{Details of $\PSecAgg$ and $\PCheck$}
\label{app:mpcdetails}
We found the best performance of our framework when we used both a norm bound and the accuracy function as ways to achieve robustness, as shown in \secref{experiment}. 
Both the checks can be instantiated from off-the-shelf MPC protocols, owing to their use of standard building blocks. We combine both the checks by carefully using the subprotocols required. We use subprotocols for secure inference, multiplication, oblivious sorting, while assuming a shared key setup. There has been a lot of interest in proposing efficient instantiations of sorting and secure inference in the recent years~\cite{CCS:AFOPRT21,ruffle,tetrad,CCS:MohRin18,USENIX:DalEscKel21,EPRINT:KelSun22}, offering protocols at various levels of security. The only restriction we have in terms of using these different frameworks is that, we require the secret-sharing to have an access structure such that the honest parties can reconstruct the secret. The variants of replicated secret-sharing \cite{EPRINT:FLNW16} that have been used in the recent PPML works, including the ones mentioned above, satisfy these criteria. We refer the readers to these, along with MP-SPDZ~\cite{EPRINT:Keller20} for detailed descriptions of the input sharing, $\PShare$, and reconstruction protocols, $\PRec$. For completeness, we provide the functionality for using the maximum softmax value in \appref{proofs}, but we do not report numbers from experiments with this function.

As mentioned earlier, our secure aggregation protocol assumes there exists a set of parties that can run MPC, denoted by $\mpcnodes$. For ease of description, one could assume that $\mpcnodes$ constitutes the servers that want to train the FL model together, as is the case in multi-server FL. However, it is possible to form an aggregation committee by sampling from the clients, as shown in works such as \cite{SP:MWAPR23,CCS:BBGLR20}. We leave it as a potential future direction to integrate our check into a protocol with a changing set of parties constituting $\mpcnodes$. 
We assume that the nodes run a Setup phase, which establishes shared PRF keys between every pair of parties, as well as a common PRF key between all of them. These keys can be used to generate common random values by keeping track of a counter, and incrementing it every time the function is called. In every round, $\mpcnodes$ use the keys to generate a set of $2 \cor + 1$ random IDs for each client, $\calC_i$, $i \in \{1, m\}$. Each client's model will be checked against validation data from the set of random IDs assigned to it.

In addition to the features we strive to achieve, such as not relying on public datasets, we also placed an emphasis on making the MPC implementation-friendly. Since secure inference is often part of the system when training an FL model, reusing that building block to also perform a robustness check makes the system easier to deploy. We describe high level details of the protocol below.

Each round $t$, begins with each client $\calC_i$, for $i \in [1, m]$, secret-sharing its model update to $\mpcnodes$ using $\PShare$. $\mpcnodes$ call $\FSecInf$ (\figref{fsecinf}) with the client's model, the global model from the previous round, and each of the validation datasets in $\calC_{check, i}$. As mentioned before, there are $2 \cor + 1$ client datasets against which the model is checked, where $\cor$ is the total number of corrupted clients. 

$\mpcnodes$ receive $\arith{\Acc{i}{\msf{new}, j}}$, for $j \in \calC_{check, i}$ and $\arith{\Acc{i}{\msf{prev}, j}}$, which denote the accuracy of the client $\calC_i$'s model and the global model of the previous round, against the validation client's data respectively. The difference between the two accuracies is denoted by $\diff{i}{j}$. These values are then sorted by calling $\FSort$ (\figref{fsort}), which sorts the array obliviously. $\mpcnodes$ then compute the trimmed mean of this vector. Since it is known to all parties that they need to trim the top and bottom $m_c/2$ elements from this array, they can do the computation locally and compute the mean of the remaining ones. The result of the trimmed mean is denoted by $\score{i}{}$ for each client $\calC_i$.

$\mpcnodes$ compute the score for each client, to get a secret-sharing of the list of scores, $\{ \score{i}{} \}_{i \in m}$. In order to eliminate the lowest-performing updates, the parties sort the scores and select the top $k \%$ of them, where $k$ is a parameter picked by the server at the start of the protocol, that is known to all the parties.

In parallel, $\mpcnodes$ also compute the norm of the model updates received, and set the bound to be $\lambda$ times the median of all norm values of the updates. Here $\lambda$ is a non-zero constant to control the tightness of the bound. The final updates to be aggregated are the ones that pass both the accuracy check, and the norm bound check.

$\mpcnodes$ then use $\PRec$ to reconstruct the aggregated update, $\bfu_{\msf{aggr}}$. The formal description of the protocol appears in \figref{pisecagg}. In order to use $\FMaxSoft$, we only need to replace the calls to $\FSecInf$ with $\FMaxSoft$, in \figref{pisecagg}.

\section{Security of $\PSecAgg$}
\label{app:proofs}

We prove our framework secure in the real-world/ideal-world simulation paradigm~\cite{EPRINT:Lindell16}. Security is argued by showing that whatever an adversary can do in the real world, it can do in the ideal world, where it interacts with an ideal functionality (trusted third party). In the ideal world, parties send their inputs to the ideal functionality, which then computes the desired function, and returns the output. In the real world, parties run the steps of the protocol.

The adversary is modeled as a probabilistic polynomial time (PPT) algorithm, that corrupts $t$, an honest minority of parties, among the parties involved in the protocol ($\PartySet{})$ in the real world.

We define an ideal functionality, $\FSecAgg$, as follows. It receives the models and the test data from the parties at the start of each round, computes the accuracy of the models on the test data, applies and the predicate $\valid()$ on them. It then filters out the bad predicates, aggregates the rest, and returns the output to the parties. The formal description appears in \figref{fsecagg}.

\begin{protofig}{Functionality $\FSecAgg$}{Functionality for Robust Secure Aggregation}{fsecagg}

    \textbf{Parameters:} Parties $\PartySet{} = \{P_1, \ldots, P_m\}$, number of corrupt parties $\cor$. Description of a check function, represented by $\msf{Chk}$. \\

    \textbf{Run:} On receiving $(\arith{\vec m}, \arith{\vec m'}, \arith{\vec D})$ from each party, where $\vec m$ is the updated local model, and $\vec m'$ is the model before the update, do the following:

    \begin{enumerate}
        \item Reconstruct the models $\vec m = \Sigma \arith{\vec m}$, $\vec m' = \Sigma \arith{\vec m'}$, and the test data $\vec D = \Sigma \arith{\vec D}$.
        \item Compute the inference of each of the models, $\msf{Acc}_i = \msf{Chk}(\vec m_i, \vec D)$, and $\msf{Acc'}_{i - 1} = \msf{Chk}(\vec m'_i, \vec D)$ for $i \in [1, m]$. Compute $\scr_i = \msf{Acc}_i - \msf{Acc}'_i$, for $i \in [1, m]$.
        \item Sort the vector of $\scr$ values, and apply a trimmed mean on them by deleting the top $\cor/2$ and bottom $\cor/2$ scores. Denote the final score for $\vec m_i$ is denoted as $\msf{scr}_i$.
        \item Select the models corresponding to the top $k$ scores across the $m$ parties.
        \item Compute the norms of these models, $\msf{norm}_i$, for $i \in [1, m - k]$. Set the value $\msf{bound}$ to be 1.5 times the median of the norms.
        \item Aggregate all models such that $\msf{norm}_i \leq \msf{bound}$ as $m_{\msf{agg}}$.
        \item Send $m_{\msf{agg}}$ to $\Adv$. If $\Adv$ sends $\msf{abort}$, send $\msf{abort}$ to all the parties. Else, send $m_{\msf{agg}}$.
    \end{enumerate}

    \textbf{Output to the adversary:} Send $\arith{m_{\msf{agg}}}$ to $\Adv$, and wait for it to respond. If $\Adv$ sends $\msf{abort}$, send $\msf{abort}$ to all the parties. Otherwise, send $\arith{m_{\msf{agg}}}$ to all the parties.

\end{protofig}

\begin{protofig}{Functionality $\FSort$}{Functionality for Oblivious Sort}{fsort}

\textbf{Parameters:} Parties $\PartySet{} = \{P_1, \ldots, P_m\}$.

\begin{enumerate}
    \item On receiving a sharing of a vector of elements, $\arith{\vec u}_i$, from all the parties reconstruct $\vec u = \Sigma_{i = 1}^m \arith{\vec u}_i$.
    \item Define the sorted array as $\vec u'$. Send $\arith{\vec u'}$ to all the parties.
\end{enumerate}
    
\end{protofig}

\begin{protofig}{Functionality $\FRand$}{Functionality to Sample Common Random Values}{frand}

\textbf{Parameters:} Parties $\PartySet{} = \{P_1, \ldots, P_m\}$. A pseudorandom function (PRF) $F$.

\begin{enumerate}
    \item Sample pairwise PRF keys, $\kappa_{ij}$ for each pair of parties $(P_i, P_j)$, for $i, j \in [1, m]$. Sample a common key $\kappa$ for all the parties.
    \item Send all $\kappa, \kappa_{ij}$, where $P_i$ is controlled by $\Adv$, to $\Adv$. If $\Adv$ sends $\msf{abort}$, send $\msf{abort}$ to all the parties. Else, send $\kappa$ and the respective keys to all the parties.
\end{enumerate}
    
\end{protofig}

\begin{protofig}{Functionality $\FSecInf$}{Functionality for Secure Inference}{fsecinf}

\textbf{Parameters:} Parties $\PartySet{} = \{P_1, \ldots, P_m\}$, a neural network architecture $N$.

\begin{enumerate}
    \item On receiving $(\arith{\vec w}, \arith{\vec D})$ from all the parties, reconstruct $\vec w = \Sigma_{i = 1}^m \arith{\vec w}$ and $\vec D = \Sigma_{i = 1}^m \arith{\vec D}$.
    \item Compute inference using the weights $\vec w$, on the dataset $D$, for a neural network $N$, to get $\msf{Acc}$.
    \item Send $\arith{\msf{Acc}}$ to $\Adv$ and wait for it to respond. If $\Adv$ sends $\msf{abort}$, send $\msf{abort}$ to all the parties. Else, send $\arith{\msf{Acc}}$.
\end{enumerate} 
    
\end{protofig}

\begin{protofig}{Functionality $\FMult$}{Functionality for Multiplication}{fmult}

\textbf{Parameters:} Parties $\PartySet{} = \{P_1, \ldots, P_m\}$.

\begin{enumerate}
    \item Receives secret-shared values $\arith{x}, \arith{y}$ from $\PartySet{}$, and reconstructs $x, y$.
    \item Computes $z = x \cdot y$, and sends it to $\Adv$.
    \item Depending on what $\Adv$ replies with, sends either $\arith{z}$ to the parties, or $\msf{abort}$.
\end{enumerate}

\end{protofig}

\begin{protofig}{Functionality $\FMaxSoft$}{Functionality for Maximum Softmax}{fmaxsoft}

\textbf{Parameters:} Parties $\PartySet{} = \{P_1, \ldots, P_m\}$, a classifier $f$, with $L$ classes.

\begin{enumerate}
    \item On receiving $(\arith{\vec w}, \arith{\vec D})$ from all the parties, reconstruct $\vec w = \Sigma_{i = 1}^m \arith{\vec w}$ and $\vec D = \Sigma_{i = 1}^m \arith{\vec D}$.
    \item Compute $\msf{MSft} = \frac{1}{|\bfD|} \sum_{(\bfx_k, y_k) \in \bfD} \max f(\bfx_k; \bfw)$, where $f(\bfx_k; \bfw) \in \R^L$ is the predicted softmax probability over $L$ classes, which is an output by the classifier $f$ parameterized with $\bfw$. 
    \item Send $\arith{\msf{MSft}}$ to $\Adv$ and wait for it to respond. If $\Adv$ sends $\msf{abort}$, send $\msf{abort}$ to all the parties. Else, send $\arith{\msf{MSft}}$.
\end{enumerate} 
    
\end{protofig}

\begin{theorem}\label{thm:secagg}
    Protocol $\PSecAgg$ securely realises the functionality $\FSecAgg$ in the presence of a malicious adversary that can statically corrupt up to $\cor < m/2$ parties in $\mpcnodes$, in the $(\FSecInf, \FSort, \FRand)$-hybrid model.
\end{theorem}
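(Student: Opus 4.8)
The plan is to work in the real/ideal simulation paradigm and exploit the structural fact already noted in the body: $\PSecAgg$ (together with $\PCheck$) is a straight-line composition of calls to the ideal functionalities $\FSecInf$, $\FSort$, $\FRand$ (and $\FMult$, $\FComp$, $\FZeroOne$), interleaved only with \emph{local} linear operations on shares; the nodes in $\mpcnodes$ never exchange protocol messages outside these calls. In the $(\FSecInf, \FSort, \FRand)$-hybrid model this reduces the task to building a single simulator $\Sim$ that, given access to $\FSecAgg$, reproduces the joint distribution of the corrupt parties' view together with the honest parties' output. By the sequential composition theorem it then suffices to simulate each functionality call and the final reconstruction.

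To construct $\Sim$, I would proceed as follows. First $\Sim$ emulates $\FRand$ by sampling the pairwise keys $\kappa_{ij}$ and the common key $\kappa$ for the corrupt nodes and handing them over, forwarding $\Abort$ to $\FSecAgg$ if $\Adv$ aborts. During the input phase $\Sim$ plays the honest side of $\PShare$; since the corrupt clients (which may collude with the corrupt nodes) must deposit shares of $\vec u_i$ and $\vec D^{\msf{val}}_i$, $\Sim$ reconstructs these from the shares it receives and forwards the extracted inputs to $\FSecAgg$. For every subsequent call --- $\FSecInf$, $\FMult$, $\FSort$, $\FComp$, $\FZeroOne$ --- the result is an \emph{unopened} secret sharing, so $\Sim$ simply returns freshly sampled shares to the corrupt nodes and records whatever inputs or abort signals $\Adv$ supplies. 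The only value ever opened is the aggregate: at $\PRec$, $\Sim$ queries $\FSecAgg$ for $\vec w_{\msf{aggr}}$ and then chooses honest-party shares so that reconstruction yields exactly this value, consistent with the corrupt nodes' recorded shares.

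For indistinguishability I would give a short hybrid argument that walks through the functionality calls one at a time, replacing each hybrid-model output sharing by the simulated random sharing. Each step is perfectly (or statistically) indistinguishable because the replicated scheme of \citet{CCS:AFLNO16} is private against any set below the reconstruction threshold: with honest majority among $\mpcnodes$, the shares held by the corrupt set are distributed independently of the secret, so they leak nothing about the scores $\score{i}{j}$, the bits $b_i$, or the intermediate accuracies. Correctness of the opened output follows because each hybrid functionality computes exactly the quantity prescribed by the corresponding step of $\FSecAgg$ (norm bound, trimmed-mean score, top-$k$ selection, masked sum), so the $\vec w_{\msf{aggr}}$ returned in both worlds coincides whenever $\Adv$ does not abort.

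The main obstacle is handling malicious behaviour cleanly rather than the privacy argument, which is routine once everything is reduced to ideal-functionality calls. Two points need care. First, \emph{input well-definedness}: when corrupt clients or nodes submit malformed or inconsistent shares, I must argue that the redundancy of the replicated scheme lets $\Sim$ either recover a unique committed input or detect the inconsistency and abort, matching the secure-with-abort guarantee the functionalities already expose. Second, \emph{abort consistency}: since each functionality lets $\Adv$ inject $\Abort$ after seeing its output share, $\Sim$ must relay these aborts to $\FSecAgg$ so that the honest parties' abort/deliver pattern is identical in both worlds; because no partial output is released before the final $\PRec$, an abort at any intermediate step is indistinguishable from an abort at the output step, which keeps the reduction simple.
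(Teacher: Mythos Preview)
Your proposal is correct and follows essentially the same route as the paper's proof sketch: build a simulator in the hybrid model, emulate $\FRand$ for the setup keys, extract corrupt-client inputs during $\PShare$, return fresh random shares for every intermediate functionality output, and patch the honest shares at $\PRec$ using the aggregate obtained from $\FSecAgg$, forwarding aborts as they arise. If anything, your write-up is more explicit than the paper's --- you spell out the per-call hybrid argument and the input-extraction / abort-consistency issues that the paper leaves implicit in its sketch.
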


\begin{proof}
    Since the parties in the protocol only interact with $\FSecInf$, $\FMult$, $\FSort$, $\FZeroOne$, and $\FRand$ for the key setup, the proof is quite straightforward. Let $A$ denote the set of corrupt parties. We construct a simulator, $\Sim$ that interacts with the adversary controlled parties in the real world, and ideal functionality, $\FSecAgg$, in the ideal world. $\Sim$ initializes a boolean flag, $\msf{flag} = 0$, which indicates whether an honest party aborts during the protocol.

    Note that $\mpcnodes$ does not have any inputs, and only acts as output-receiving parties. The inputs, namely the models $\vec w_i$, and the validation datasets, $\vec D_i$, for $i \in [1, m]$, come from the clients. Thus, we can distinguish between two cases. The honest clients communicate their data directly to the functionality, $\FSecAgg$, and $\Sim$ does not need to simulate anything. If $\Adv$ sends an $\msf{abort}$ to it, it forwards that to $\FSecAgg$. Else, it receives $m_{\msf{agg}}$ from the functionality, and sends it to $\Adv$.

    The Setup phase is simulated by $\Sim$ by invoking the simulator for $\FRand$. In order for a corrupt client to secret-share its input, $\mpcnodes$ reconstruct a fresh random value $u$ to it, using their pairwise PRF keys.
    Since we assume that we are using a linear secret-sharing scheme, such as an instantiation of a replicated secret-sharing scheme~\cite{CCS:MohRin18,USENIX:DalEscKel21,tetrad,EPRINT:FLNW16}, the simulator knows $\Adv$'s keys, and can extract the corrupt client's input, $\vec w', \vec D'$. $\Sim$ runs the simulator for $\FSecInf$, by using random data for the cases when the $\vec w'$ is supposed to be checked with honest client's validation data. $\Sim$ emulates $\FSort$ internally.

    If the adversary cheated at any point in the protocol, set $\msf{flag} = 1$, and send $\msf{abort}$ to $\FSecAgg$. Else, send $\vec w', \vec D'$ to $\FSecAgg$, to receive $m_{\msf{agg}}$. If $\msf{flag} = 1$, or if $\Adv$ sends an $\msf{abort}$, send $\msf{abort}$ to $\FSecAgg$. Else, send $m_{\msf{agg}}$ to $\Adv$.

\end{proof}

\section{Experimental Details}
We repeat each setup for 10 trials (using random seeds 40-49) and report the average.
All the experiments are run on a server with thirty-two AMD EPYC 7313P 16-core processors, 528 GB of memory, and four Nvidia A100 GPUs. Each GPU has 80 GB of memory.

\subsection{Detailed Experimental Setup}
\label{app:setup}

\mypara{Datasets.}
We consider the following four datasets:
\begin{enumerate}
    \item MNIST~\cite{lecun2010mnist} is $28\times28$ grayscale image dataset for 0-9 digit classification. It consists of 60k training and 10k test images balanced over 10 classes.
    \item Fashion-MNIST (FMNIST)~\cite{fmnist} is identical to MNIST in terms of the image size and format, the number of classes, and the number of training and test images.
    \item SVHN~\cite{svhn} is a dataset for a more complicated digit classification benchmark than MNIST. It contains 73,257 training and 26,032 testing 32×32 RGB images of printed digits (from 0 to 9) cropped from real-world pictures of house number plates.

    \item CIFAR-10~\cite{cifar10} is $32\times32$ colored images balanced over ten object classes. It has 50k training and 10k test images.
\end{enumerate}
We follow the given train vs. test set split and further split each train set, balanced across classes. We follow the setup in~\cite{CCS:CGJv22} and reserve random 10k training samples as the public validation set ($\calD_{pubval}$) for aggregation baselines that leverage public validation datasets. For fair comparison, the remaining training samples are partitioned across clients by a Dirichlet distribution $\text{Dir}_K(\alpha)$ and used for training in all settings. We set $\alpha = 0.5$ to emulate data heterogeneity in realistic FL scenarios~\cite{cao2020fltrust}.
We assume there are 100 clients where $10\%$ or $20\%$ of them are malicious clients.

\mypara{Models.} Each client has a local model with one of the two following model architectures:
\begin{enumerate}
    \item LeNet-5~\cite{lecun1998gradient} with 5 convolutional layers and 60k parameters are used for MNIST and FMNIST.
    \item ResNet-18~\cite{resnet} with 18 layers and 11M parameters is used for SVHN and CIFAR-10.
\end{enumerate}

\mypara{Metrics.}
We compute the global model's accuracy on the entire test set at the end of each communication round. We repeat 10 trials using a random seed from 40 to 49 for each setting and report the average accuracy.

\mypara{Setup for distribution shifts in Section~\ref{sec: empirical-adaptability}.} We emulate the aforementioned scenarios using the MNIST data for existing clients and SVHN data for evolving or new clients, each with LeNet-5 as the model.
Note that both the original and shifted data share the same label set (\ie digits 0-9), simulating the covariate shifts~\cite{ood_bench, bai2023feed}. This ensures that a single global model will be capable of accommodating both distributions.
In the initial 250 rounds, the server communicates with 100 MNIST clients and establishes a public validation dataset using MNIST data, which remains static throughout the communication.
Following this, for Scenario 1, every 250 rounds, a random selection of 20 out of 100 clients begin to employ SVHN data and share local updates based on this data.
For Scenario 2, every 250 rounds, along with the 100 existing MNIST clients, a new set of 20 clients using randomly selected SVHN data join the communication. Consequently, the total number of participating clients increases to 120.
We transform the SVHN data into greyscale images, resized to 28x28 to match the format of the MNIST data. It is also split using the same non-IID split factor ($\alpha = 0.5$), ensuring consistency in the data setting for both MNIST and SVHN.
We report the global model's accuracy on the MNIST test set for the first 250 rounds and on the union of MNIST and SVHN tests for the remaining.
We also vary the periods of the distribution shifts by experimenting with intervals of 250 and 100 rounds.

\subsection{Soft $\score{}{}$ with Confidence Score}
\label{app:softscore}
We highlight that the computation of $\score{}{}$ is not limited to the accuracy difference in Equation~\ref{eqn:check_acc}. Rather,  one could compute other functions such as the maximum softmax value:
\begin{align}
\label{eqn:check_prob}
    &\score{i}{j} = \msf{Chk}_{prob}(\bfw_i, \bfD^{\msf{val}}_j) , \\ \nonumber
    &\msf{Chk}_{prob}(\bfw, \bfD) = \frac{1}{|\bfD|} \sum_{(\bfx_k, y_k) \in \bfD} \max_l ~ \bff(\bfx_k; \bfw)
\end{align}
where $\bff(\bfx_k; \bfw) \in \R^L$ is the predicted softmax probability over $L$ classes.
The above scoring function gauges the confidence of the model's predictions on the given validation data. Intuitively, given an uncorrupted set of data, which is the majority case, a benign model would have high confidence while a corrupted model would output low confidence on their predictions~\cite{hendrycks2022msp, hendrycks2022scaling}.

\subsection{Elaboration on Adaptive Attacks}
\label{app:adaptive-attack}
Assume the first $\cor$ out of $m$ clients are malicious without loss of generality. Then the adaptive local model poisoning attack aims to optimize 
\begin{align}    
\label{eqn:adaptive-attack}
&\max_{\bfw'_1,\cdots,\bfw'_{\cor}} \bfs^{T}(\bfw - \bfw')\\ \nonumber
\text{subject to} ~&\bfw = \calA(\bfw_1, \cdots, \bfw_{\cor}, \bfw_{\cor+1} \cdots, \bfw_{m})\\ \nonumber
&\bfw' = \calA(\bfw'_1, \cdots, \bfw'_{\cor}, \bfw_{\cor+1} \cdots, \bfw_{m})
\end{align}
where $\bfw_i$ is the clean local model under no attack from client $i$, $\bfw'_{i}$ is the compromised model from client $i$, and $\bfs$ represents the sign of change of clean global model in the current iteration under no attack.
According to~\cite{fang2020local} the solution can be found heuristically by solving
\begin{align}
\label{eqn:adaptive-attack-approx}
&\max_{\lambda} \ \ \  \lambda \\ \nonumber
\text{subject to} ~&\bfw'_1 = \bfw - \lambda\bfs, \\ \nonumber
&\bfw'_i = \bfw'_1 \mbox{ and } \pmb{\alpha}_i=1, \forall i\in [\cor]\\ \nonumber
\end{align}

In this heuristics, the malicious clients may collude to send the same poisoning model or support each other to be included in the final aggregation, which pull the aggregate model update in the opposite direction under no attack. The adversary wants to find the largest magnitude $\lambda$ of moving in the opposite direction such that the poisoning models $\bfw'_i$s are still included in the final aggregation.

\mypara{Adversary's Knowledge.}
In order to solve the optimization problem in Eqn.~\ref{eqn:adaptive-attack}, the adversary can make the malicious clients collude with each other and have access to their local training data, local model updates, loss function, and learning rate on the malicious clients, but not on the benign clients.
It also knows the entire defense mechanism and any thresholds used by the mechanism; \ie in our proposed aggregation protocol, we accept the top 90\% client models ranked by their validation scores (in the case of 10\% corruption). 
In RoFL, EIFFeL, or Elsa, for example, threshold parameters used in their validation predicate are also known as they are broadcast to everyone at the start of every round. On the contrary, the threshold values cannot be revealed to anyone in ours as they are hidden inside the MPC node.
Accordingly, they do not know $\bfs$, $\bfw$, and $\bfw’$ directly, but can only estimate them by colluding with each other. 
This corresponds to \textit{partial knowledge} in Section 3 of \citet{fang2020local}. The adversary approximates $\bfs$ using the before-attack local updates on the malicious clients at each communication round.

We also note that the adversary does \emph{not} know 1) the exact data used for validation, and 2) the exact integrity check result list $\calV$ because 1) is our privacy goal and 2) $\calV$ is only computed after the malicious clients stage their model update.

\mypara{Adaptive Attack to Norm bound and Norm ball.}
We can derive closed-form solutions from Eqn.~\ref{eqn:adaptive-attack} for norm bound and norm ball in the simplest case.
The objective in Eqn.~\ref{eqn:adaptive-attack} can be rewritten as follows: assuming a basic mean aggregation for $\calA$, and a robust validation check $\msf{Chk}(\bfw)$,
\begin{align*}
\bfs^T (\bfw - \bfw') 
  &= \bfs^T \Biggl\{ \left( \sum_{i=1}^{\cor} \bfw_i\, \indicator{\msf{Chk}(\bfw_i) = 1} 
           + \sum_{i=\cor+1}^{m} \bfw_i\, \indicator{\msf{Chk}(\bfw_i) = 1} \right) \\
  &\quad\; - \left( \sum_{i=1}^{\cor} \bfw'_i\, \indicator{\msf{Chk}(\bfw'_i) = 1} 
           + \sum_{i=\cor+1}^{m} \bfw_i\, \indicator{\msf{Chk}(\bfw_i) = 1} \right) \Biggr\} \\
  &= \bfs^T \left( \sum_{i=1}^{\cor} \bfw_i\, \indicator{\msf{Chk}(\bfw_i) = 1} 
           - \sum_{i=1}^{\cor} \bfw'_i\, \indicator{\msf{Chk}(\bfw'_i) = 1} \right)
\end{align*}
Since the first term is constant independent to $\bfw'_1, \dots, \bfw'_{\cor}$, now the objective becomes,
\begin{align*}
&\min_{\bfw'_1,\cdots,\bfw'_{\cor}} \bfs^{T} \sum_{i=1}^{\cor} \bfw'_i \indicator{\msf{Chk}(\bfw'_i) = 1} \\ \nonumber
\text{subject to} ~&\bfw = \calA(\bfw_1, \cdots, \bfw_{\cor}, \bfw_{\cor+1} \cdots, \bfw_{m})\\ \nonumber
&\bfw' = \calA(\bfw'_1, \cdots, \bfw'_{\cor}, \bfw_{\cor+1} \cdots, \bfw_{m})
\end{align*}

For norm bound, we further constrain that $||\bfw'_i||_2 < \tau, \forall i \in [\cor]$ so that malicious updates always pass the validity check and are included in the aggregation.
Then we obtain $\bfw'_i = - \bfs \cdot \frac{\tau - \epsilon}{\sqrt{d}}$ where $||\bfs||_2 = \sqrt{d}$ and $\epsilon$ is some positive small value (\eg $10^{-6}$) to ensure the inequality constraint.

For norm ball, we further constrain that $||\bfw'_i - \bfw_{pubval}|| < \tau, \forall i \in [\cor]$. 
We obtain $\bfw'_i = \bfw_{pubval} + \Delta \bfw'_i$ where $\Delta \bfw'_i = -\bfs \cdot \min (\tau - \epsilon, \frac{\tau - \epsilon}{\sqrt{d}})$.

\mypara{Adaptive Attack to Cosine similarity.}
We follow the instantiation of adaptive attack in \citet{fltrust} under partial knowledge setup.

\mypara{Adaptive Attack to \ours.}
To check if the poisoning models $\bfw'_{i}$s are still included in the aggregate, the adversary will use its own dataset to estimate the accuracy of its poisoning models as well as the benign models. Let $\bfD'$ denote the dataset stored at the malicious clients. The adversary computes the accuracy of all benign models $\bfw_{\cor+1}, \cdots, \bfw_{m}$ on $\bfD'$. Then for each $\lambda$ and its associated poisoning model $\bfw_i' = \bfw - \lambda\bfs$, the adversary evaluate its accuracy on $\bfD'$ and only accept $\lambda$ if the accuracy enters the top 50\% among the accuracy of benign models. We sample $\lambda\in\{1e^{-5}, 1e^{-1}\}$ and choose the largest accepted $\lambda$ to construct poisoning models.

\subsection{Proof: Optimality of Extreme Manipulation}
\label{app:extrememanipultationproof}

\begin{definition}[Extreme Check Score Manipulation]
\label{def:optimalattack}
When $\calC_{j}$ is malicious, a check score manipulation is \emph{extreme} if
\begin{equation}
\label{eqn:extreme}
\score{i}{j} =
\begin{cases}
1 &\mbox{ if $\calC_i$ is malicious,}\\
-1 &\mbox{ otherwise.}
\end{cases}
\end{equation}
\end{definition}

We show in Theorem~\ref{thm:optimal} the optimality of the check score manipulation in Definition~\ref{def:optimalattack} in the sense that such manipulation will also allow some malicious model updates to maximize the adversarial objective. 
\begin{theorem}(Optimality of the Extreme Manipulation.)
\label{thm:optimal}
    Let $\bfw^{(T)}_{adv}$ be a model that maximizes the objective in Formula~\ref{obj:adaptiveadv} and is obtainable by some combination of malicious model updates and validation data set manipulation. Then there must be a set of malicious model updates that lead to $\bfw^{(T)}_{adv}$ under the extreme check score manipulation in Definition~\ref{def:optimalattack}.
\end{theorem}

\begin{proof}
    First, we observe that the range of $\scr(i,j)$ in our protocol is from $-1$ to $1$.
    Consider a malicious $\calC_i$. When $C_j$ is malicious, $\scr(i, j)$ equals 1, which is the largest possible. When $C_j$ is benign, $\scr(i, j)$ is independent to malicious data manipulation. As a result, $\scr_i$ for a malicious client $\calC_i$ under the extreme manipulation is no less than under other manipulation. Similarly, $\scr_i$ for a benign client under the extreme manipulation will be no more than that under other manipulation. Therefore, the ranking of $\scr_i$ for a malicious client $\calC_i$ will never drop when switching from a non-extreme manipulation to extreme manipulation.

    Let $\bfw_{adv}$ be a malicious model update from $\calC_i$ in the sequence that leads to $\bfw^{(T)}_{adv}$ under some non-extreme validation data manipulation.
    \begin{enumerate}
        \item
        If $\bfw_{adv}$ is included in aggregation, then $\scr_i$ must already be in the top $k\%$ among all clients, $k$ being the threshold defined in the protocol. Since switching to the extreme manipulation will never reduce the ranking of a malicious $\scr_i$, $\bfw_{adv}$ will remain in aggregation. 
        \item
        If $\bfw_{adv}$ is not included in aggregation, then the attacker can send any $\bfw'_{adv}$ that will be excluded from aggregation as well.~\footnote{If such a $\bfw'_{adv}$ does not exist, i.e., all updates from the malicious client will be accepted, then we allow the attacker to refrain from submitting the updates. The extreme manipulation serves to check the robustness lower bound of our work. Therefore, we allow it more flexibility, i.e. more attacker strength.}
    \end{enumerate}
    Therefore, an adversary using extreme manipulation can always construct malicious model updates that have the same impact in each aggregation step, and thus obtain $\bfw^{(T)}_{adv}$.
\end{proof}

\mypara{Robustness Lower Bound.} We acknowledge that in practice, the adversary may not find an actual validation set that achieves the conditions in Equation~\ref{eqn:extreme}. Nevertheless, we equip the adaptive attack against our framework with the extreme check score manipulation in our empirical study in Section~\ref{sec:experiment}.
We implement the extreme manipulation by allowing the adversary to directly set the value of $\scr(i, j)$ when $\calC_j$ is a malicious client. By giving this additional power to the adversary, we can find an empirical \emph{lower} bound to the robustness of our work. 
We also note that it is still possible for an attacker to craft validation sets to boost scores for malicious clients while lowering those of benign ones in reality. 
They may flip labels to disrupt benign updates or embed backdoors that trigger high scores for malicious updates.
Thus, our empirical lower bound serves as a practical robustness indicator.

\section{MPC Benchmark of \ours}
\label{app:mpc-benchmark}

\mypara{Setup.} The framework is instantiated using building blocks from MP-SPDZ~\cite{CCS:Keller20}. Benchmarks were conducted on a MacBook Pro with an Apple Silicon M4 Pro processor and 48 GB RAM. We simulated the network setup locally, and ran all the parties on the same machine, with 4 threads. For the LAN case, we simulated a network with 1 Gbps bandwidth, and 1 ms latency, and for WAN, we used 200 Mbps and 20 ms respectively.

We benchmark the online phases of three components of our MPC protocol -- the norm bound check, the cross-client check, and selecting the top-k gradients separately. In reality, the normbound and cross-client check can be run in parallel, with top-k run on the outputs of these checks. All the components are instantiated using a semi-honest 3PC protocol from MP-SPDZ, specifically, replicated-ring-party.x which implements the replicated secret-sharing based protocol from \cite{CCS:AFLNO16}. The protocols are run with a 128-bit ring, although the larger ring size compared to the typical 64-bit ring is only to the size of the norm. It is possible to operate over a 64-bit ring, and switch to a 128-bit ring only for the norm, although we do not implement this optimization. 

For most cases, max-prob combined with normbound gives a good balance between robustness and runtime efficiency. However, in some scenarios the full accuracy check provides a nontrivial amount of robustness gain, which is why we also report the times for it. Note that normbound is generally faster than both versions of the cross-client checks, except in the case of ResNet-18. This is because the vector for which the norm is computed is of size 11,173,962 in the case of ResNet-18, and due to a limitation with the amount of RAM available, we could not take full advantage of parallelism.

\end{document}